\theoremstyle{plain}
\newtheorem*{assumption*}{A}
\newaliascnt{proposition}{theorem}
\newaliascnt{lemma}{theorem}
\newtheorem{lemma}[lemma]{Lemma}
\newaliascnt{corollary}{theorem}
\theoremstyle{definition}
\newaliascnt{definition}{theorem}
\newaliascnt{remark}{theorem}
\def\nset{\mathbb{N}}
\newcommandx{\PEt}[2][1=]{\mathbb{E}_{#1}\left[#2\right]}
\def\rmd{\mathrm{d}}
\def\rme{\mathrm{e}}
\def\dimy{\mathsf{p}}
\def\dimz{\mathsf{m}}
\def\barG{\overline{G}}
\def\barH{\overline{H}}
\def\eqsp{\,}
\newcommand{\normMat}[2]{\left\|#2\right\|_{#1}}
\newcounter{hypH}
\newcommand{\eqdef}{\ensuremath{:=}}
\newcommandx\sequence[3][2=,3=]
\newcommandx{\sequencen}[2][2=n\in\nset]{\ensuremath{( #1, \eqsp #2 )}}
\newcommandx\dsequence[4][3=k,4=\zset]{\ensuremath{( (#1_{#3}, #2_{#3}), \eqsp #3 \in #4 )}}
\newcommandx{\as}[1][1=\PP]{\ensuremath{#1\--\mathrm{a.s.}}}
\begin{document}

\title{Particle rejuvenation of Rao-Blackwellized Sequential Monte Carlo smoothers for Conditionally Linear and Gaussian models}
\date{}

\author{Ngoc Minh Nguyen\footnote{LTCI, CNRS and T\'el\'ecom ParisTech, 46 rue Barrault 75634 Paris Cedex 13, France.}\and Sylvain {L}e {C}orff\footnote{Laboratoire de Math\'ematiques d'Orsay, Univ. Paris-Sud, CNRS, Universit\'e Paris-Saclay, 91405 Orsay, France.}\and Eric Moulines\footnote{Centre de Math\'ematiques Appliqu\'ees, UMR 7641, Ecole Polytechnique, France.}}

\lhead{Ngoc Minh Nguyen et al.}
\rhead{Conditionally Linear and Gaussian models}

\maketitle

\begin{abstract}
This paper focuses on Sequential Monte Carlo approximations of smoothing distributions in conditionally linear and Gaussian state spaces. To reduce Monte Carlo variance of smoothers, it is typical in these models to use Rao-Blackwellization: particle approximation is used to sample sequences of hidden regimes while the Gaussian states are explicitly integrated conditional on the sequence of regimes and observations, using variants of the Kalman filter / smoother. The first successful attempt to use Rao-Blackwellization for smoothing extends the Bryson-Frazier smoother for Gaussian linear state space models using the generalized two-filter formula together with Kalman filters / smoothers. More recently, a  forward backward decomposition of smoothing distributions mimicking the Rauch-Tung-Striebel smoother for the regimes combined with backward Kalman updates has been introduced.  
This paper investigates the benefit of introducing additional rejuvenation steps in all these algorithms to sample at each time instant new  regimes conditional on the forward and backward particles. This defines particle based approximations of the smoothing distributions whose support is not restricted to the set of particles sampled in the forward or backward filter. These procedures are applied to commodity markets which are described using a two factor model based on the spot price and a convenience yield for crude oil data.
\end{abstract}

\section{Introduction} \label{sec:intro}
State space models are bivariate stochastic processes $\{(Y_i,Z_i)\}_{i\ge 1}$ where the state sequence $(Z_i)_{i\ge 1}$ is a Markov chain which is only partially observed through the sequence $(Y_i)_{i\ge 1}$. Conditionally on the state sequence $(Z_i)_{i\ge 1}$ the observations are independent and for all $\ell\ge 1$ the conditional distribution of $Y_{\ell}$ given $(Z_i)_{i\ge 1}$ depends on $Z_{\ell}$ only. These models are used in a large variety of disciplines such as financial econometrics, biology, signal processing, see \cite{delmoral:2013} and the references therein. In general state space models, bayesian filtering and smoothing problems, i.e. the computation of the posterior distributions of a sequence of states $(Z_{i},\ldots,Z_{p})$ for $1\le i\le p\le \ell$ given observations $(Y_{1},\ldots,Y_{\ell})$, are challenging tasks. Filtering refers to the estimation of the distributions of the hidden state $Z_i$ given the observations $(Y_1,\ldots,Y_i)$ up to time $i$, while fixed-interval smoothing stands for the estimation of the distribution of sequence of states $(Z_{i},\ldots,Z_{p})$ given observations $(Y_{1},\ldots,Y_{\ell})$ with $1\le i\le p<\ell$. 
 
When the state and observation models are linear and Gaussian, filtering can be solved explicitly using the Kalman filter \cite{kalman:1960}. Exact solutions of the fixed-horizon smoothing problem can be obtained using either the Rauch-Tung-Striebel smoother \cite{rauch:striebel:tung:1965} or the Bryson-Frazier two-filter smoother \cite{bryson:frazier:1963}. 
This paper focuses on  Conditionally Linear and Gaussian Models (CLGM) given for $i\ge 2$ by:
\begin{equation}
\label{eq:model:state}
Z_i = d_{a_{i}} + T_{a_{i}}Z_{i-1}+H_{a_{i}}\varepsilon_i\eqsp,
\end{equation}
where:
\begin{enumerate}[-]
\item $(\varepsilon_i)_{i\ge 2}$ is a sequence of independent and identically distributed (i.i.d.) $\dimz$-dimensional Gaussian vectors with zero mean and identity covariance.
\item $(a_i)_{i\ge 1}$ is a homogeneous Markov chain taking values in a finite space $\{1,\ldots,J\}$, called \emph{regimes}, with initial distribution $\pi$ and transition matrix $Q$.  
\item $(H_j)_{1\le j\le J}$ are $\dimz \times \dimz$ positive-definite matrices, $(d_j)_{1\le j\le J}$ $\dimz$-dimensional vectors and $(T_j)_{1\le j\le J}$ $\dimz \times \dimz$ positive-definite matrices.
\item $Z_1$ is a $\dimz$-dimensional Gaussian random variable with mean $\mu_1$ and variance $\Sigma_1$ independent of $(\varepsilon_i)_{i\ge 2}$.
\end{enumerate}
Let $n$ be the number of observations. At each time step $1\le i \le n$, the observation $Y_i$ is given by:
\begin{equation}
\label{eq:model:obs}
Y_i = c_{a_i} + B_{a_i}Z_i + G_{a_i}\eta_i\eqsp,
\end{equation}
where:
\begin{enumerate}[-]
\item $(\eta_i)_{i\ge 1}$ is a i.i.d. sequence of $\dimy$-dimensional Gaussian vectors, independent of $(\varepsilon_i)_{i\ge 2}$ and $Z_1$.
\item $(G_j)_{1\le j \le J}$ are $\dimy \times \dimy$ positive-definite matrices, $(c_j)_{1\le j \le J}$ $\dimy$-dimensional vectors and $(B_j)_{1\le j \le J}$ $\dimy \times \dimz$ matrices.
\end{enumerate}
CLGM play an important role in many applications; see \cite{sarkka:2013} and the references therein for an up-to-date account. A crucial feature of these models is that, conditional on the regime sequence $(a_1,\ldots,a_n)$, both the state equation and the observation equation are linear and Gaussian, which implies that conditional on the sequence of regimes and on the observations, the filtering and the smoothing distributions of the continuous states $(Z_1,\dots,Z_n)$ can be computed explicitly.

To exploit this specific structure, it has been suggested in the pioneering works of \cite{chen:liu:2000,doucet:godsill:andrieu:2000} to solve the filtering problem by combining Sequential Monte Carlo (SMC) methods to sample the regimes with the Kalman filter to compute the conditional distribution of the states sequence $(Z_i)_{1\le i\le n}$ conditional on the regimes and on the observations.
This is a specific instance of Rao-Blackwellized Monte Carlo filters, often referred to as the \emph{Mixture Kalman Filter}. Improvements of these early filtering techniques  have been introduced in \cite{doucet:gordon:krishnamurthy:2001,schon:gustafsson:nordlund:2005}. 

The use of Rao-Blackwellization to solve the smoothing problem has been proved to be more challenging and has received satisfactory solutions only recently. The first forward-backward smoother proposed in the literature \cite{fong:godsill:doucet:west:2002} was not fully Rao-Blackwellized as it required to sample the hidden linear states in the backward pass. An alternative approach, based on the so-called structural approximation of the model suggested in an early paper by \cite{kim:1994}, was proposed by \cite{barber:2006} to avoid to sample a continuous state in the backward pass. This approximation is rather ad-hoc and the resulting smoother is not consistent when the number of particles goes to infinity. The inaccuracy introduced by the approximation might be difficult to control.

The first fully Rao-Blackwellized SMC smoother which should lead to consistent approximations when the number of particles grows to infinity was proposed by  \cite{briers:doucet:maskell:2010} and extends the Bryson-Frazier smoother for Gaussian linear state space models using the generalized two-filter formula with Rao-Blackwellization steps for the forward and the backward filters.  This two-filter approach combines a forward filter with a backward information filter which are approximated numerically using SMC for the regime sequence and Kalman filtering techniques for the hidden linear states.

More recently, \cite{sarkka:bunch:godsill:2012,lindsten:bunch:godsill:schon:2013,lindsten:bunch:sarkka:schon:godsill:2015} introduced a Rao-Blackwellized smoother based on the forward-backward decomposition of the FFBS algorithm with Rao-Blackwellization steps both in the forward and backward time directions. The update of the smoothing distribution of the regime given the observations shares some striking similarities with the Rauch-Tung-Striebel smoothing procedure, which is at the heart of the FFBS procedure. The Rao-Blackwellization requires to update backward in time the smoothing distribution of the states given the regimes and the observations, which is achieved by using an \emph{\`a la Kalman} backward update.

In this paper, we propose to improve the performance of the algorithms introduced in \cite{briers:doucet:maskell:2010} and in \cite{sarkka:bunch:godsill:2012,lindsten:bunch:godsill:schon:2013,lindsten:bunch:sarkka:schon:godsill:2015} by using additional Rao-Blackwellization steps which allows to sample new particles in the backward pass. This approach may be seen as an extension of the ideas of \cite{fearnhead:clifford:2003} for Rao-Blackwellized smoothers.  In \cite{briers:doucet:maskell:2010}, for all $1\le i\le n$, the sampled forward and backward sequences are merged to approximate the posterior distribution of $(a_i,z_i)$. This provides an approximation whose support is restricted to the particles produced at time $i$ by the backward particle filter. As noted in \cite[Secion~2.6]{fearnhead:wyncoll:tawn:2010}, these two-filter smoothers are prone to suffer from degeneracy issues when the algorithm associates forward particles at time $i-1$ with backward particles at time $i$. We propose to approximate the marginal smoothing distribution of $(a_i,z_i)$ by merging the sampled forward and backward trajectories at times $i-1$ and $i+1$ and integrating out all possible paths between time $i-1$ and time $i$ and between time $i$ and time $i+1$ instead of sampling random variables. Similarly, in the backward pass of \cite{sarkka:bunch:godsill:2012,lindsten:bunch:godsill:schon:2013,lindsten:bunch:sarkka:schon:godsill:2015}, a regime $\tilde{a}_i$ is sampled at time $1\le i\le n-1$ using the particles produced by the forward filter at time $i$. In this case, particle rejuvenation may be introduced by using the forward weighted samples at time $i-1$ and extending these trajectories at time $i$ with a Kalman filter for all possible values of the regime. Then, $\tilde{a}_i$ is sampled in $\{1,\ldots,J\}$ using an appropriately adapted weight.

The paper is organized as follows. The algorithms introduced in \cite{briers:doucet:maskell:2010} and in \cite{sarkka:bunch:godsill:2012,lindsten:bunch:godsill:schon:2013,lindsten:bunch:sarkka:schon:godsill:2015} as long as the proposed rejuvenation associated with each method are presented in Section~\ref{sec:RaoBlackwell}. The performance of all these methods is illustrated in Section~\ref{sec:numerical:experiments} with simulated data. In Section~\ref{sec:crudeoil}, an application to commodity markets is presented; the performance of our procedure is illustrated with crude oil  data. A detailed derivation of the algorithms is provided in the Appendix.

\section{Rao-Blackwellized smoothing algorithms}
\label{sec:RaoBlackwell}
This section details the Sequential Monte Carlo algorithms which can be used to approximate the conditional distribution of the states $(a_1,\ldots,a_n)$ or the marginal distributions of $(a_i,z_{i})$ given the observations $(Y_1,\ldots,Y_n)$. For all $\dimz\,\times\,\dimz$ matrix  let $|A|$ be the determinant of $A$. If $A$ is a positive-definite matrix, for all $z\in\mathbb{R}^{m}$ define
\[
\normMat{A}{z}^2 \eqdef z'A^{-1}z\eqsp,
\]
where for any vector or matrix $z$, $z'$ denotes the transpose matrix of $z$. Let $m_{}(a_{i},z_{i-1};z_i)$ be the probability density of the conditional distribution of $Z_i$ given $(a_{i},Z_{i-1})$ and $g_{}(a_{i},z_{i};y_i)$ be the probability density of the conditional distribution of $Y_i$ given $(a_{i},Z_{i})$:
\begin{align}
\label{eq:definition-m}
m_{}(a_{i},z_{i-1};z_i) & \eqdef \left|2\pi\barH_{a_i}\right|^{-1/2}\exp\left\{-\frac{1}{2}\normMat{\barH_{a_i}}{z_i -d_{a_{i}} -T_{a_{i}}z_{i-1}}^2\right\}\eqsp,\\
\label{eq:definition-g}
g_{}(a_{i},z_{i};y_i) & \eqdef \left|2\pi\barG_{a_i}\right|^{-1/2}\exp\left\{-\frac{1}{2}\normMat{\barG_{a_i}}{y_i - c_{a_i} - B_{a_i}z_{i}}^2\right\}\eqsp,
\end{align}
where
\[
\barG_j \eqdef G_jG'_j\;,\; \barH_{j} \eqdef H_jH'_j\eqsp.
\]
All the algorithms considered in this paper are based on forward-backward or two-filter decompositions of the smoothing distributions and share the same forward filter presented in Section~\ref{sec:forward}.

\subsection{Forward filter}
\label{sec:forward}
The SMC approximation $p^N_{}(a_{1:i},z_{i}|y_{1:i})$ of $p_{}(a_{1:i},z_{i}|y_{1:i})$ may be obtained using a standard Rao-Blackwellized  algorithm. The procedure produces a sequence of trajectories $(a^k_{1:i})_{1\le k \le N}$ associated with normalized importance weights $(\omega^k_{i})_{1\le k \le N}$ ($\sum_{k=1}^N \omega^k_i = 1$) used to define the following approximation of $p_{}(a_{1:i},z_{i}|y_{1:i})$:
\begin{equation}
\label{eq:forward:SMC}
p^N_{}(a_{1:i},z_{i}|y_{1:i}) = \sum_{k=1}^N\omega^k_{i}\,p_{}(z_{i}|a^k_{1:i},y_{1:i})\,\delta_{a^k_{1:i}}(a_{1:i})\eqsp,
\end{equation}
where $\delta$ is the Dirac delta function. In this equation, the conditional distribution of the hidden state $z_{i}$ given the observations $y_{1:i}$ and a trajectory $a^k_{1:i}$ is a Gaussian distribution whose mean $\mu^k_{i}$ and variance $P^k_{i}$ may be obtained by using the Kalman filter update.

\vspace{.2cm}

\noindent\textbf{Initialization}\\
At time $i=1$, write, for all $1\le j \le J$,
\[
\mu^j_{1|0} = c_{j}+B_{j}\mu_1\;\;\mbox{and}\;\;P_{1|0}^j =B_{j}\Sigma_1B'_{j} + \barG_{j}\eqsp.
\]
$(a^k_1)_{1\le k \le N}$ are sampled independently in $\{1,\ldots,J\}$ with probabilities proportional to
\[
\pi_j p(a_1=j|y_1) \propto \pi_j |P_{1|0}^j|^{-1/2}\exp\left\{-(y_1-\mu^j_{1|0})'(P_{1|0}^j)^{-1}(y_1-\mu^j_{1|0})/2\right\}\eqsp.
\]
Then, $\mu_1^k$ and $P_1^k$ are computed using a Kalman filter:
\begin{align*}
K^k_{1} &=\Sigma_1B'_{a_1^k}\left(B_{a_1^k}\Sigma_1B'_{a_1^k} + \barG_{a_1^k}\right)^{-1}\eqsp,\\
\mu^k_{1} &= \mu_{1} + K^k_1\left(Y_1 - c_{a_{1}^k} - B_{a_1^k}\mu_{1}\right)\eqsp,\\
P^k_{1} &=\left(I_{\dimz}-K^k_{1}B_{a_1^k}\right)\Sigma_1\eqsp,
\end{align*}
where for all positive integer $p$, $I_p$ is the $p\times p$ identity matrix. Each particle particle $a_1^k$ is associated with the importance weight $\omega^k_1 = 1/N$.

\vspace{.2cm}

\noindent\textbf{Iterations}\\
Several procedures may be used to extend the trajectories $(a^k_{1:i-1})_{1\le k \le N}$ at time $i$.
For all sampled trajectories $(a_{1:i-1}^k)_{1\le k \le N}$ and all $1\le j \le J$,  \cite{chen:liu:2000} used the incremental weights:
\[
\gamma_i^{j,k} = p(y_i | a_i = j, a_{1:i-1}^k, y_{1:i-1}) Q(a_{i-1}^k,j)\eqsp.
\]
The conditional distribution of $Y_i$ given $a^k_{1:i-1}$, $a_i$ and $Y_{1:i-1}$ is a Gaussian distribution with mean $c_{a_i}+B_{a_i}\mu^k_{i|i-1}(a_i)$ and variance $B_{a_i}P^k_{i|i-1}(a_i)B'_{a_i} + \barG_{a_i}$ where 
\begin{align*}
\mu^{k}_{i|i-1}(a_i) &= d_{a_i} + T_{a_i}\mu^k_{i-1}\eqsp,\\
P^{k}_{i|i-1}(a_i) &= T_{a_i}P^k_{i-1}T'_{a_i} + \barH_{a_i}\eqsp.
\end{align*} 
Therefore,
\[
\gamma_i^{j,k} \propto  Q(a_{i-1}^k,j)|B_jP^{j,k}_{i|i-1}B'_j + \barG_j|^{-1/2}\exp\left\{-\frac{1}{2}\normMat{B_jP^{j,k}_{i|i-1}B'_j + \barG_j}{y_i-c_{j}-B_j\mu^{j,k}_{i|i-1}}^2\right\} \eqsp,
\]
where:  
\begin{align}
\mu^{j,k}_{i|i-1} &= \mu^{k}_{i|i-1}(j) = d_{j} + T_{j}\mu^k_{i-1}\eqsp,\label{eq:mui|i-1}\\
P^{j,k}_{i|i-1} &= P^{k}_{i|i-1}(j) = T_{j}P^k_{i-1}T'_{j} + \barH_{j}\eqsp.\label{eq:Pi|i-1}
\end{align} 
In \cite{chen:liu:2000}, for all $1\le k \le N$, an ancestral path is chosen with probabilities proportional to  $(\omega^k_{i-1})_{1\le k \le N}$. Then, the new regime $a_i^k$ is sampled in  $\{1,\ldots, J\}$ with probabilities proportional to $(\gamma_i^{j,k})_{1\le j\le J}$. A drawback of this method is that only ancestral paths that have been selected using the importance weights $(\omega^k_{i-1})_{1\le k \le N}$ are extended at time $i$. Following \cite{barembruch:garivier:moulines:2008}, this may be improved by considering all the offsprings of all ancestral trajectories $(a_{1:i-1}^k)_{1\le k \le N}$. Each ancestral path has $J$ offsprings at time $i$, it is thus necessary to choose a given number of trajectories at time $i$ (for instance $N$) among the $NJ$ possible paths. To obtain the weight associated with each offspring write the following approximation of  $p(a_{1:i}|y_{1:i})$ based on the weighted samples at time $i-1$:
\begin{align*}
p^N(a_{1:i}|y_{1:i})&\propto \sum_{k=1}^N\omega^k_{i-1}Q(a^k_{i-1},a_i)p(y_i|a^k_{1:i-1},a_i,y_{1:i-1})\delta_{a^k_{1:i-1}}(a_{1:i-1})\eqsp,\\
&\propto \sum_{k=1}^N\sum_{j=1}^J\omega^k_{i-1}\gamma_i^{j,k}\delta_{(a^k_{1:i-1},j)}(a_{1:i})\eqsp.
\end{align*}
Therefore, each ancestral trajectory of the form  $(a^k_{1:i-1},j)$, $1\le k \le N$, $1\le j\le J$, is associated with the normalized weight $\tilde{\omega}^{j,k}_{i} \propto \omega^k_{i-1}\gamma_i^{j,k}$. Several random selection schemes have been proposed to discard some of the possible offsprings to maintain an average number of $N$ particles at each time step. Following \cite{barembruch:garivier:moulines:2008}, we might choose between the Kullback-Leibler Optimal Selection (KL-OS) or the Chi-Squared Optimal Selection (CS-OS) to associate a new weight to each of the $NJ$ trajectories. If the new weight is 0, then the corresponding particle can be removed.

\vspace{.2cm}

\noindent\textbf{KL-OS:} $\lambda$ is chosen as the solution of :
\[
\sum_{k=1}^N\sum_{j=1}^J\mathrm{min}\left(\tilde{\omega}^{j,k}_{i}/\lambda,1\right) = N\eqsp.
\]
For all $1\le j \le J$ and $1\le k \le N$, if $\tilde{\omega}^{j,k}_{i}\ge \lambda$ then the new weight $\tilde{\Omega}^{j,k}_{i}$ is $\tilde{\Omega}^{j,k}_{i}=\tilde{\omega}^{j,k}_{i}$ and if $\tilde{\omega}^{j,k}_{i}< \lambda$:
\[
\tilde{\Omega}^{j,k}_{i}=
\left\{
 \begin{array}{rl}
  &\hspace{-.5cm}\lambda \;\mbox{with probability } \tilde{\omega}^{j,k}_{i}/\lambda\eqsp,\\
&\hspace{-.5cm}0 \;\mbox{with probability } 1-\tilde{\omega}^{j,k}_{i}/\lambda\eqsp.
\end{array}
\right.
\]
\textbf{CS-OS:} $\lambda$ is chosen as the solution of :
\[
\sum_{k=1}^N\sum_{j=1}^J\mathrm{min}\left(\sqrt{\tilde{\omega}^{j,k}_{i}/\lambda},1\right) = N\eqsp.
\]
For all $1\le j \le J$ and $1\le k \le N$, if $\tilde{\omega}^{j,k}_{i}\ge \lambda$ then the new weight $\tilde{\Omega}^{j,k}_{i}$ is $\tilde{\Omega}^{j,k}_{i}=\tilde{\omega}^{j,k}_{i}$ and if $\tilde{\omega}^{j,k}_{i}< \lambda$:
\[
\tilde{\Omega}^{j,k}_{i}=
\left\{
 \begin{array}{rl}
  &\hspace{-.5cm}\sqrt{\tilde{\omega}^{j,k}_{i}\lambda} \;\mbox{with probability } \sqrt{\tilde{\omega}^{j,k}_{i}/\lambda}\eqsp,\\
&\hspace{-.5cm}0 \;\mbox{with probability } 1-\sqrt{\tilde{\omega}^{j,k}_{i}/\lambda}\eqsp.
\end{array}
\right.
\]
Then, in both cases, all particles such that $\tilde{\Omega}^{j,k}_{i} = 0$ are discarded and for all the other trajectories defined as an ancestral path $(a^k_{1:i-1})$ extended by $a^k_i = j$, the new corresponding weight $\omega$ in \eqref{eq:forward:SMC} is given by the normalized weight $\tilde{\Omega}^{j,k}_{i}$. In the numerical sections of this paper, the Kullback-Leibler Optimal Selection (KL-OS) scheme is used.


\subsection{FFBS based algorithms}
\subsubsection{FFBS algorithms of \cite{sarkka:bunch:godsill:2012,lindsten:bunch:godsill:schon:2013,lindsten:bunch:sarkka:schon:godsill:2015}}
%
%
\cite{sarkka:bunch:godsill:2012,lindsten:bunch:godsill:schon:2013,lindsten:bunch:sarkka:schon:godsill:2015} proposed a Rao-Blackwellized procedure to sample the regime backward in time following the same steps as in the Forward Filtering Backward Smoothing algorithm \cite{huerzeler:kunsch:1998,doucet:godsill:andrieu:2000}. The algorithm relies on the decomposition given, for all $1\le i \le n-1$, by:
\[
p(a_{1:n}|y_{1:n}) = p(a_{1:i}|a_{i+1:n},y_{1:n})p(a_{i+1:n}|y_{1:n})\eqsp.
\]
This decomposition is similar to the Rauch-Tung-Striebel decomposition of the filtering distribution. The first factor on the right hand side of the previous equation is nevertheless more difficult to handle because it itself relies on all the observations. As noted by \cite{sarkka:bunch:godsill:2012}, this term can be computed recursively by considering the following decomposition:
\begin{equation}
\label{eq:FFBS:decomp}
p(a_{1:i}|a_{i+1:n},y_{1:n}) \propto p(y_{i+1:n},a_{i+1:n}|a_{1:i},y_{1:i})p(a_{1:i}|y_{1:i})\eqsp.
\end{equation}
The second factor in the last equation may be approximated using the ancestral  trajectories $(a^k_{1:i})_{1\le k \le N}$ and the associated importance weights $(\omega^k_{i})_{1\le k \le N}$ produced by the forward filter. Therefore, $p(a_{1:i}|a_{i+1:n},y_{1:n})$ may be approximated by:
\[
p^N(a_{1:i}|a_{i+1:n},y_{1:n}) = \sum_{k=1}^N \tilde{\omega}^k_{i|n}\delta_{a^k_{1:i}}(a_{1:i}) \quad \text{with} \quad \tilde{\omega}^k_{i|n} \propto \omega_i^k p(y_{i+1:n},a_{i+1:n}|a^k_{1:i},y_{1:i}) \eqsp.
\]
Then, a trajectory $\tilde{a}_{1:n}$ approximatively distributed according to $p(a_{1:n}|y_{1:n})$ may be sampled following these steps:
\begin{enumerate}[-]
\item Set $\tilde{a}_n= a_n^k$ with probabilities proportional to $(\omega_n^k)_{1\le k \le N}$.
\item For all $1\le i\le n-1$, set $\tilde{a}_i = a_i^k$ with probabilities proportional to $(\tilde{\omega}_{i|n}^k)_{1\le k \le N}$.
\end{enumerate}
This algorithm requires to compute the quantity  $p(y_{i+1:n},a_{i+1:n}|a^k_{1:i},y_{1:i})$. This predictive quantity is available analytically using  Kalman filtering techniques. However, this has to be done for each trajectory $(a^k_{1:i})_{1\le k \le N}$, which leads to an algorithm with a prohibitive computational complexity. \cite{lindsten:bunch:sarkka:schon:godsill:2015} proposed a procedure computationally less intensive by conditioning with respect to $z_i$ and then marginalizing with respect to this variable:
\begin{equation}
\label{eq:FFBS:incremental}
p(y_{i+1:n},a_{i+1:n}|a^k_{1:i},y_{1:i}) = \int p(y_{i+1:n},a_{i+1:n}|z_i,a^k_{i})p(z_{i}|a^k_{1:i},y_{1:i})\rmd z_i\eqsp.
\end{equation}
This is similar to the two-filter decomposition of the smoothing distribution, see Section~\ref{sec:two-filer}. By  \cite{lindsten:bunch:sarkka:schon:godsill:2015},
\[
p(y_{i+1:n},a_{i+1:n}|z_i,a_{i}) \propto Q(a_i,a_{i+1}) \exp\left\{-(z_i'\Omega_i(a_{i+1:n})z_i-2\lambda_i'(a_{i+1:n})z_i)/2\right\}\eqsp,
\]
where the proportionality is with respect to $(a_i,z_i)$ and
\[
p(y_{i:n},a_{i+1:n}|z_i,a_{i}) \propto \exp\left\{-(z_i'\widehat{\Omega}_i(a_{i:n})z_i-2\widehat{\lambda}'_i(a_{i:n})z_i)/2\right\}\eqsp,
\]
where the proportionality is with respect to $z_i$. These quantities may be computed recursively backward in time with:
\begin{align*}
\widehat{\Omega}_n(a_n) &= B'_{a_n}\barG^{-1}_{a_n}B_{a_n}\eqsp,\\
\widehat{\lambda}_n(a_n) &=B'_{a_n}\barG^{-1}_{a_n}(y_n-c_{a_n})\eqsp.
\end{align*}
Then, for $1\le i\le n-1$, define $m_{i+1} = \widehat{\lambda}_{i+1} - \widehat{\Omega}_{i+1}d_{a_{i+1}}$ and $M_{i+1} = H_{a_{i+1}}'\widehat{\Omega}_{i+1}H_{a_{i+1}} + I_{\dimz}$ and write
\begin{align*}
\Omega_i(a_{i+1:n}) &= T'_{a_{i+1}}(I_{\dimz}-\widehat{\Omega}_{i+1}(a_{i+1:n})H_{a_{i+1}}M^{-1}_{i+1}H'_{a_{i+1}})\widehat{\Omega}_{i+1}(a_{i+1:n})T_{a_{i+1}}\eqsp,\\
\lambda_i(a_{i+1:n}) &=T'_{a_{i+1}}(I_{\dimz}-\widehat{\Omega}_{i+1}(a_{i+1:n})H_{a_{i+1}}M^{-1}_{i+1}H'_{a_{i+1}})m_{i+1}\eqsp.
\end{align*}
As $p(y_{i:n},a_{i+1:n}|z_i,a_{i}) = p(y_{i}|z_i,a_{i})p(y_{i+1:n},a_{i+1:n}|z_i,a_{i})$,
\begin{align*}
\widehat{\Omega}_i(a_{i:n}) &= \Omega_i(a_{i+1:n})+ B'_{a_i}\barG^{-1}_{a_i}B_{a_i}\eqsp,\\
\widehat{\lambda}_i(a_{i:n}) & = \lambda_i(a_{i+1:n}) + B'_{a_i}\barG^{-1}_{a_i}(y_i-c_{a_i})\eqsp.
\end{align*}
Then, by \eqref{eq:FFBS:incremental},
\begin{equation}
\label{eq:post:obs}
p(y_{i+1:n},a_{i+1:n}|a^k_{1:i},y_{1:i})\propto Q(a_i^k,a_{i+1})|\Lambda^k_i(a_{i+1:n})|^{-1/2}\exp\left\{-\eta^k_i(a_{i+1:n})/2\right\}\eqsp,
\end{equation}
where the proportionality is with respect to $a^k_{1:i}$ and
\begin{align*}
\Lambda^k_i(a_{i+1:n})&= (\Gamma_i^k)'\Omega_i(a_{i+1:n})\Gamma_i^k + I_{\dimz}\eqsp,\\
\eta^k_i(a_{i+1:n}) &= \|\mu_i^k\|^2_{\Omega^{-1}_i(a_{i+1:n})} - 2\lambda'_i(a_{i+1:n})\mu_i^k-\|(\Gamma_i^k)'(\lambda_i(a_{i+1:n})-\Omega_i(a_{i+1:n})\mu_i^k)\|^2_{\Lambda_i(a_{i+1:n})}\eqsp,
\end{align*}
where $P_i^k = \Gamma_i^k(\Gamma_i^k)'$. Therefore,
\[
\tilde{\omega}_{i|n} \propto \omega_i^kQ(a_i^k,a_{i+1})|\Lambda^k_i(a_{i+1:n})|^{-1/2}\exp\left\{-\eta^k_i(a_{i+1:n})/2\right\}\eqsp.
\]
If $(\tilde{a}^k_{1:n})_{1\le k \le \tilde{N}}$ are independent copies of $\tilde{a}_{1:n}$, the SMC approximation of \cite{lindsten:bunch:sarkka:schon:godsill:2015} of the joint smoothing distribution of the regime is:
\[
p^{\mathsf{Lbscg}}_{\tilde{N}}(a_{1:n}|Y_{1:n}) = \frac{1}{\tilde{N}}\sum_{k=1}^{\tilde N} \delta_{\tilde{a}^{k}_{1:n}}(a_{1:n})\eqsp.
\]

\subsubsection{Particle rejuvenation of FFBS algorithms}
The crucial step of the FFBS algorithm is the decomposition \eqref{eq:FFBS:decomp} which allows to extend a backward trajectory $\tilde{a}_{i+1:n}$ by choosing a particle in the set $(a_i^k)_{1\le k \le N}$ produced by the forward filter (and discarding the states $a^k_{1:i-1}$). An improved version of this FFBS algorithm which is not constrained to sample states in the support $(a_i^k)_{1\le k \le N}$ may be defined for all $2\le i\le n-1$ by writing:
\begin{align*}
p(a_{1:i}|a_{i+1:n},y_{1:n}) &\propto p(y_{i+1:n},a_{i+1:n}|a_{1:i},y_{1:i})p(a_{1:i}|y_{1:i})\eqsp,\\
&\propto p(y_{i+1:n},a_{i+1:n}|a_{1:i},y_{1:i})\int p(a_{1:i-1},z_{i-1}|y_{1:i-1})Q(a_{i-1},a_i)\\
&\hspace{6cm}m(a_i,z_{i-1};z_i)g(a_i,z_i;y_i)\rmd z_{i-1:i}\eqsp.
\end{align*}
Replacing $p(a_{1:i-1},z_{i-1}|y_{1:i-1})$ in the integral by the particle approximation obtained during the forward pass and using Kalman filtering techniques for each trajectory $(a^k_{1:i-1})_{1\le k\le N}$ and each $a_i\in\{1,\ldots,J\}$ yields:
\[
\int p^N(a_{1:i-1},z_{i-1}|y_{1:i-1})Q(a_{i-1},a_i)m(a_i,z_{i-1};z_i)g(a_i,z_i;y_i)\rmd z_{i-1:i} \propto \sum_{k=1}^N \omega_{i|i-1}^k(a_i)\delta_{a^k_{1:i-1}}(a_{1:i-1})\eqsp,
\]
where
\[
\omega_{i|i-1}^k(a_i) = \omega_{i-1}^k Q(a_{i-1}^k,a_i)|\Sigma^k_{i|i-1}(a_i)|^{-1/2}\mathrm{exp}\left\{-\frac{1}{2}\|y_i - y^k_{i|i-1}(a_i)\|_{\Sigma^k_{i|i-1}(a_i)}\right\}\eqsp,
\]
\[
y^k_{i|i-1}(a_i) = c_{a_i} + B_{a_i}(d_{a_i}+T_{a_i}\mu^k_{i-1})\;\mbox{and}\; \Sigma^k_{i|i-1}(a_i) = B_{a_i}(T_{a_i}P^k_{i-1}T'_{a_i}+\barH_{a_i})B'_{a_i} + \barG_{a_i}\eqsp.
\]
On the other hand, for all $1\le k \le N$, $p(y_{i+1:n},a_{i+1:n}|a^k_{1:i-1},a_i,y_{1:i})$ is computed as in \eqref{eq:post:obs} with all possible values $a_i\in\{1,\ldots,J\}$ and not only the regime of the filtering pass $(a_i^k)_{1\le k \le N}$. This means that a Kalman filter must be used for each trajectory $a^k_{1:i-1}$ which may be extended by $a_i\in\{1,\ldots,J\}$. Denote by $\mu_{i|i-1}^k(a_i)$ and $P_{i|i-1}^k(a_i)$ the mean and covariance matrix of the law of $z_i$ given $(a^k_{1:i-1},a_i)$ obtained as in \eqref{eq:mui|i-1} and \eqref{eq:Pi|i-1}. Then,
\begin{equation}
\label{eq:posterior:FFBS}
p(y_{i+1:n},a_{i+1:n}|a^k_{1:i-1},a_i,y_{1:i}) = Q(a_i,a_{i+1})|\Lambda^k_{i|i-1}(a_{i:n})|^{-1/2}\exp\left\{-\eta^k_{i|i-1}(a_{i:n})/2\right\}\eqsp,
\end{equation}
where the proportionality is with respect to $(a^k_{1:i-1},a_i)$ and
\begin{align*}
\Lambda^k_{i|i-1}(a_{i:n})&= (\Gamma_{i|i-1}^k(a_i))'\Omega_i(a_{i+1:n})\Gamma_{i|i-1}^k(a_i) + I_{\dimz}\eqsp,\\
\eta^k_{i|i-1}(a_{i:n}) &= \|\mu_{i|i-1}^k(a_i)\|^2_{\Omega^{-1}_i(a_{i+1:n})} - 2\lambda'_i(a_{i+1:n})\mu_{i|i-1}^k(a_i)\\
&\hspace{3cm}-\|(\Gamma_{i|i-1}^k(a_i))'(\lambda_i(a_{i+1:n})-\Omega_i(a_{i+1:n})\mu_{i|i-1}^k(a_i))\|^2_{\Lambda_i(a_{i+1:n})}\eqsp,
\end{align*}
where $\Gamma_{i|i-1}^k(a_i)$ is defined as $P_{i|i-1}^k(a_i) = \Gamma_{i|i-1}^k(a_i)(\Gamma_{i|i-1}^k(a_i))'$. The distribution $p(a_{1:i}|a_{i+1:n},y_{1:n})$ is then approximated by :
\begin{multline}
\label{eq:FFBS:rejuv}
p^N(a_{1:i}|a_{i+1:n},y_{1:n})\\
\propto\sum_{k=1}^N \omega_{i|i-1}^k(a_i)Q(a_i,a_{i+1})|\Lambda^k_{i|i-1}(a_{i:n})|^{-1/2}\exp\left\{-\eta^k_{i|i-1}(a_{i:n})/2\right\}\delta_{a^k_{1:i-1}}(a_{1:i-1})\eqsp.
\end{multline}
By integrating over all possible paths $a_{1:i-1}$, $\tilde{a}_i$ is sampled in $\{1,\ldots,J\}$. This particle rejuvenation step allows to explore states which are not in the support of the particles produced by the forward filter  and improves the accuracy and the variance of the original FFBS algorithm, see Section~\ref{sec:numerical:experiments} for numerical illustrations. 

Another modification of the FFBS algorithm based on a Markov chain Monte Carlo (MCMC) sampling step was introduced in \cite[Section~5.2]{lindsten:bunch:sarkka:schon:godsill:2015}. Instead of sampling from \eqref{eq:FFBS:rejuv}, \cite[Section~5.2]{lindsten:bunch:sarkka:schon:godsill:2015} proposed to draw a forward path $a_{1:i-1}$ in $(a^k_{1:i-1})_{1\le k \le N}$ and a sate $a_i$ in $\{1,\ldots,J\}$ according to:
\[
\widetilde{q}(a_{1:i}|a_{i+1:n},y_{1:n}) = \sum_{k=1}^N \widetilde{\vartheta}^k_{i-1}\widetilde{q}(a_{i}|a^k_{1:i-1},a_{i+1:n},y_{1:n})\delta_{a^k_{1:i-1}}(a_{1:i-1})\eqsp,
\]
where $(\widetilde{\vartheta}^k_{i-1})_{1\le k \le N}$ are adjustment multipliers and $\widetilde{q}(a_{i}|a^k_{1:i-1},a_{i+1:n},y_{1:n})$ is a proposal kernel chosen by the user. This means that an ancestral path $a^{\star}_{1:i-1}$ is sampled in $(a^k_{1:i-1})_{1\le k \le N}$ with weights $(\widetilde{\vartheta}^k_{i-1})_{1\le k \le N}$ and $a^{\star}_i$ is sampled from $\widetilde{q}(\cdot|a^{\star}_{1:i-1},a_{i+1:n},y_{1:n})$. Then, the proposed sequence $a^{\star}_{1:i}$ is accepted or rejected using the usual Metropolis-Hastings acceptance ratio. The choice of MCMC rejuvenation has interesting practical consequences as the computation of the acceptance ratio only requires to compute the posterior probability \eqref{eq:posterior:FFBS} for the proposed sequence $a^{\star}_{1:i}$ while our technique is based on the computation of  \eqref{eq:posterior:FFBS} for all combinations of sequences $(a^{k}_{1:i})_{1\le k \le N}$ and states $a_i\in\{1,\ldots,J\}$. Sampling from \eqref{eq:FFBS:rejuv} is computationally more intensive, especially when $N$ is large, but our method is based on a direct approximation of $p(a_{1:i}|a_{i+1:n},y_{1:n})$ based on  $(a^k_{1:i-1})_{1\le k \le N}$ and $a_{i+1:n}$ instead of approximate MCMC draws.

\subsection{Rao-Blackwellized Two-filter Smoother}
\label{sec:two-filer}
\subsubsection{Rao-Blackwellized Two-filter Smoother of \cite{briers:doucet:maskell:2010}}
\label{eq:rbtf}
Contrary to the previous methods, two-filter based smoothers are designed to compute approximations of marginal smoothing distributions (usually the posterior distribution of one or two consecutive  regimes given all the observations). \cite{briers:doucet:maskell:2010} introduced the following decomposition of the smoothing distributions for all $2\le i \le n$:
\begin{equation}
\label{eq:dens:bwd_fwd_doucet}
p(a_{i},z_{i}|y_{1:n})  \propto  p_{}(a_{i},z_{i}|y_{1:i-1})p(y_{i:n}|a_i,z_i)\eqsp.
\end{equation}
The first term on the right hand side may be approximated using the forward filter by noting that:
\[
 p_{}(a_{i},z_{i}|y_{1:i-1}) =\sum_{a_{i-1}}\int_{z_{i-1}}p_{}(a_{i-1},z_{i-1}|y_{1:i-1})m_{}(a_{i},z_{i-1};z_{i})Q(a_{i-1},a_{i}) \rmd z_{i-1}\eqsp.
\]
In the forward pass described in Section~\ref{sec:forward}, a set of possible sequences of regimes $a_{1:i-1}^k$ associated with importance weights $\omega_{i-1}^k$, $1\le k\le N$ are sampled to approximate $p_{}(a_{i-1},z_{i-1}|y_{1:i-1})$. This provides a normalized approximation $p^N(a_{i},z_{i}|y_{1:i-1})$ of $p(a_{i},z_{i}|y_{1:i-1})$. Define
\begin{align*}
\Omega^k_{i-1}(a_{i}) &= T_{a_{i}}P^k_{i-1}T'_{a_{i}} + \barH_{a_{i}}\eqsp,\;\mu^k_{i-1}(a_{i}) = d_{a_{i}} + T_{a_{i}}\mu^k_{i-1}\eqsp,\;r_{i-1}^k(a_{i}) = (\Omega^k_{i-1}(a_{i}))^{-1}\mu^k_{i-1}(a_{i})\eqsp,\\
\omega_{\mathsf{f},i}^k(a_{i}) &= \omega^k_{i-1}Q(a^k_{i-1},a_{i}) \left|2\pi \Omega^k_{i-1}(a_{i})\right|^{-1/2}\exp\left\{-\frac{1}{2}\normMat{\Omega^k_{i-1}(a_{i})}{\mu^k_{i-1}(a_{i})}^2\right\}\eqsp.
\end{align*}
Then,
\begin{equation}
\label{eq:rN}
p^N(a_{i},z_{i}|y_{1:i-1}) = \sum_{k=1}^N \omega_{\mathsf{f},i}^k(a_{i})\exp\left\{-\frac{1}{2}\normMat{\Omega^k_{i-1}(a_{i})}{z_{i}}^2+z'_{i}r_{i-1}^k(a_{i})\right\}\eqsp.
\end{equation}

As the function $(a_{i},z_{i})\mapsto p_{}(y_{i:n}|a_{i},z_{i})$ is not a probability density function, approximating the second term of \eqref{eq:dens:bwd_fwd_doucet}  using SMC samples is not straightforward. The backward filter uses artificial densities to introduce a surrogate target density function which may be approximated recursively using SMC methods. Then, the forward and backward weighted samples are combined using \eqref{eq:dens:bwd_fwd_doucet} to approximate $p(a_{i},z_{i}|y_{1:n})$. Following \cite{briers:doucet:maskell:2010}, for any probability densities $(\gamma^{}_{i})_{1\le i \le n}$, define the following joint probability densities:
\[
\tilde{p}_{n}(a_n,z_n,y_n) \eqdef \gamma^{}_{n}(a_n,z_n)g_{}(a_n,z_n;y_n)\eqsp,\quad \tilde{p}_{n}(y_n)\eqdef \sum_{a_n=1}^J\int \gamma^{}_{n}(a_n,z_n)g_{}(a_n,z_n;y_n)\rmd z_n\eqsp,
\]
and, for all $1\le i\le n-1$,
\begin{align*}
\tilde{p}_{i}(a_{i:n},z_{i:n},y_{i:n}) &\eqdef \gamma^{}_{i}(a_i,z_i)p_{}(y_{i:n}|a_{i:n},z_{i:n})p_{}(a_{i+1:n},z_{i+1:n}|a_i,z_i)\eqsp,\\
\tilde{p}_{i}(y_{i:n})&\eqdef \sum_{a_{i:n}=1}^J\int \gamma^{}_{i}(a_i,z_i)p_{}(y_{i:n}|a_{i:n},z_{i:n})p_{}(a_{i+1:n},z_{i+1:n}|a_i,z_i)\rmd z_{i:n}\eqsp.
\end{align*}
Note that this choice differs slightly from \cite{briers:doucet:maskell:2010} where it is advocated to set $\gamma^{}_i$ as the product of two independent densities $\gamma_i^{a}(a_i)$ and $\gamma_i^{z}(z_i)$. As the accuracy of the algorithm relies heavily on a proper tuning of this artificial density, a more general choice of $\gamma^{}_i$ is considered in this paper. 
By Lemma~\ref{eq:technical:twofilter}, these probability densities may be used to approximate the quantities $p_{}(y_{i:n}|a_{i},z_{i})$, $1\le i \le n$, in \eqref{eq:dens:bwd_fwd_doucet}.
\begin{lemma}
\label{eq:technical:twofilter}
For all $1\le i \le n-1$,
\begin{align}
\tilde{p}_{i}(a_i,z_i|y_{i:n}) &= p_{}(y_{i:n}|a_i,z_i)\gamma^{}_{i}(a_i,z_i)/\tilde{p}_{i}(y_{i:n})\eqsp,\label{eq:p:y1:n}\\
\tilde{p}_{i}(a_{i},z_{i}|y_{i:n}) &= \gamma^{}_i(a_i,z_i)\sum_{a_{i+1:n}=1}^J\frac{\tilde{p}_{i}(a_{i:n}|y_{i:n})p_{}(y_{i:n}|a_{i:n},z_{i})}{\int \gamma^{}_{i}(a_i,z')p_{}(y_{i:n}|a_{i:n},z')\rmd z'}\label{eq:p:ai:zi}\eqsp.
\end{align}
\end{lemma}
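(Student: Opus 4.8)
The plan is to establish both identities directly from the definitions of $\tilde{p}_{i}(a_{i:n},z_{i:n},y_{i:n})$ and $\tilde{p}_{i}(y_{i:n})$, using only Bayes' rule, marginalization, and the conditional independence structure of the CLGM \eqref{eq:model:state}--\eqref{eq:model:obs}.

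For \eqref{eq:p:y1:n}, first I would write the conditional density as $\tilde{p}_{i}(a_i,z_i|y_{i:n}) = \tilde{p}_{i}(a_i,z_i,y_{i:n})/\tilde{p}_{i}(y_{i:n})$ and compute the numerator by marginalizing the joint density over $(a_{i+1:n},z_{i+1:n})$. Since the factor $\gamma^{}_{i}(a_i,z_i)$ does not depend on these variables, it pulls out of the sum and integral, leaving $\gamma^{}_{i}(a_i,z_i)\sum_{a_{i+1:n}}\int p_{}(y_{i:n}|a_{i:n},z_{i:n})p_{}(a_{i+1:n},z_{i+1:n}|a_i,z_i)\rmd z_{i+1:n}$. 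The key observation is that the integrand equals $p_{}(y_{i:n},a_{i+1:n},z_{i+1:n}|a_i,z_i)$: conditioning $y_{i:n}$ on the full path $(a_{i:n},z_{i:n})$ coincides with conditioning on $(a_{i+1:n},z_{i+1:n},a_i,z_i)$ since $(a_i,z_i)$ is a subvector of the path, so the marginal over $(a_{i+1:n},z_{i+1:n})$ returns $p_{}(y_{i:n}|a_i,z_i)$, which yields \eqref{eq:p:y1:n}.

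For \eqref{eq:p:ai:zi}, I would build on this by first obtaining the regime posterior $\tilde{p}_{i}(a_{i:n}|y_{i:n})$. Marginalizing $\tilde{p}_{i}(a_{i:n},z_{i:n},y_{i:n})$ first over $z_{i+1:n}$ with $z_i$ fixed, then over $z_i$, and using the autonomy of the regime chain to split $p_{}(a_{i+1:n},z_{i+1:n}|a_i,z_i)=p_{}(a_{i+1:n}|a_i)p_{}(z_{i+1:n}|a_{i:n},z_i)$, gives $\tilde{p}_{i}(a_{i:n},y_{i:n}) = p_{}(a_{i+1:n}|a_i)\int \gamma^{}_{i}(a_i,z')p_{}(y_{i:n}|a_{i:n},z')\rmd z'$, whence $\tilde{p}_{i}(a_{i:n}|y_{i:n})$ is this quantity divided by $\tilde{p}_{i}(y_{i:n})$. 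Substituting into the right-hand side of \eqref{eq:p:ai:zi} cancels the normalizing integral $\int \gamma^{}_{i}(a_i,z')p_{}(y_{i:n}|a_{i:n},z')\rmd z'$ and reduces the sum to $\gamma^{}_{i}(a_i,z_i)\tilde{p}_{i}(y_{i:n})^{-1}\sum_{a_{i+1:n}}p_{}(a_{i+1:n}|a_i)p_{}(y_{i:n}|a_{i:n},z_i)$. Recognizing $\sum_{a_{i+1:n}}p_{}(a_{i+1:n}|a_i)p_{}(y_{i:n}|a_{i:n},z_i)=p_{}(y_{i:n}|a_i,z_i)$ and comparing with \eqref{eq:p:y1:n} closes the argument.

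The computations are elementary; the only delicate point is the bookkeeping of conditioning sets. The step I expect to require the most care is justifying the two conditional-independence identities, namely that $y_{i:n}$ given the full path depends on $(a_i,z_i)$ only through its inclusion in the path, and that the regime chain evolves autonomously so $p_{}(a_{i+1:n}|a_i,z_i)=p_{}(a_{i+1:n}|a_i)$; both follow from the graphical structure of \eqref{eq:model:state}--\eqref{eq:model:obs} but must be invoked explicitly to make the marginalizations legitimate.
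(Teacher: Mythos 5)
Your proposal is correct and follows essentially the same route as the paper's proof: both identities are obtained by marginalizing the defining joint density $\tilde{p}_{i}(a_{i:n},z_{i:n},y_{i:n})$ over $(a_{i+1:n},z_{i+1:n})$, using the tower property to recognize $p(y_{i:n}|a_{i:n},z_{i:n})p(a_{i+1:n},z_{i+1:n}|a_i,z_i)$ as a joint conditional, and invoking the autonomy of the regime chain together with the closed form $\tilde{p}_{i}(a_{i:n}|y_{i:n}) \propto p(a_{i+1:n}|a_i)\int\gamma_i(a_i,z')p(y_{i:n}|a_{i:n},z')\rmd z'$. The only cosmetic difference is the direction of the computation (you verify the right-hand sides reduce to the left-hand sides, while the paper expands the left-hand sides), and the two conditional-independence facts you flag are exactly the ones the paper uses implicitly.
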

\begin{proof}
The proof is postponed to Appendix~\ref{sec:app}.
\end{proof}
By definition of $\tilde{p}_i$ for all $1\le i \le n$,
\begin{align*}
\tilde{p}_{i}(a_{i:n},z_i|y_{i:n}) &\propto \gamma^{}_{i}(a_i,z_i) \int p_{}(y_{i:n}|a_{i:n},z_{i:n})p_{}(a_{i+1:n},z_{i+1:n}|a_i,z_i)\rmd z_{i+1:n}\eqsp,\\
&\propto\gamma^{}_{i}(a_i,z_i)\left\{\prod_{u=i}^{n-1}Q(a_u,a_{u+1})\right\}p(y_{i:n}|z_i,a_{i:n})\eqsp.
\end{align*}
This yields:
\[
\tilde{p}_{i}(a_{i:n}|y_{i:n}) \propto \left\{\prod_{u=i}^{n-1}Q(a_u,a_{u+1})\right\} \int \gamma^{}_{i}(a_i,z_i) p(y_{i:n}|z_i,a_{i:n})\rmd z_i\eqsp.
\]
A set of weighted trajectories $(\tilde{a}^\ell_{i:n})_{1\le \ell\le N}$ with importance weights $(\tilde{\omega}^\ell_i)_{1\le \ell\le N}$, $1\le i \le n$, may then be sampled recursively backward in time to produce a SMC approximation of $\tilde{p}_{}(a_{i:n}|y_{i:n})$ as follows.
\begin{enumerate}[-]
\item For $1\le \ell \le N$, sample $\tilde{a}^{j}_n\sim\tilde{q}_n(\cdot)$ and set:
\[
\tilde{\omega}^{\ell}_n \propto \frac{\int\gamma^{}_n(\tilde{a}^{\ell}_n,z')g_{}(\tilde{a}^{\ell}_n,z';y_n)\rmd z'}{\tilde{q}_n(\tilde{a}^{\ell}_n)}\eqsp.
\]
\item For all $1\le i \le n-1$, resample the set $(\tilde{a}^{\ell}_{i+1:n})_{1\le j\le N}$ using the normalized weights $(\tilde{\omega}^{\ell}_{i+1})_{1\le j \le N}$. Then, for $1\le \ell \le N$, sample $\tilde{a}^{j}_{i}\sim\tilde{q}_i(\tilde{a}^{\ell}_{i+1:n},\cdot)$  and set:
\[
\tilde{\omega}^{\ell}_i \propto \frac{Q(\tilde{a}^{\ell}_i,\tilde{a}^{\ell}_{i+1})\int \gamma^{}_i(\tilde{a}^{\ell}_i,z')p_{}(y_{i:n}|\tilde{a}^{\ell}_{i:n},z')\rmd z'}{\tilde{q}_i(\tilde{a}^{\ell}_{i+1:n},\tilde{a}^{\ell}_i)\int \gamma^{}_{i+1}(\tilde{a}^{\ell}_{i+1},z')p_{}(y_{i+1:n}|\tilde{a}^{\ell}_{i+1:n},z')\rmd z'}\eqsp.
\]
\end{enumerate}
To obtain uniformly weighted samples at each time step, in the numerical experiments we use:
\[
\tilde{q}_n(\cdot) = \int\gamma^{}_n(\cdot,z')g_{}(\cdot,z';y_n)\rmd z'\quad\mbox{and}\quad \tilde{q}_i(\tilde{a}^{\ell}_{i+1:n},\cdot) = \frac{Q(\cdot,\tilde{a}^{\ell}_{i+1})\int \gamma^{}_i(\cdot,z')p_{}(y_{i:n}|(\cdot,\tilde{a}^{\ell}_{i+1:n}),z')\rmd z'}{\int \gamma^{}_{i+1}(\tilde{a}^{\ell}_{i+1},z')p_{}(y_{i+1:n}|\tilde{a}^{\ell}_{i+1:n},z')\rmd z'}\eqsp.
\]
By \eqref{eq:p:y1:n} and \eqref{eq:p:ai:zi}, 
\begin{align*}
p_{}(y_{i:n}|a_i,z_i)
&= \frac{\tilde{p}_{i}(y_{i:n})  \tilde{p}_{i}(a_i,z_i|y_{i:n})}{\gamma^{}_{i}(a_i,z_i)}= \tilde{p}_{i}(y_{i:n})  \sum_{a_{i+1:n}=1}^J\frac{\tilde{p}_{i}(a_{i:n}|y_{i:n})p_{}(y_{i:n}|a_{i:n},z_{i})}{\int \gamma^{}_{i}(a_i,z')p_{}(y_{i:n}|a_{i:n},z')\rmd z'}\label{eq:p:ai:zi}\eqsp,
\end{align*}
which suggests the following particle approximation $p^{N}_{}(y_{i:n}|a_{i},z_{i})$ of $p_{}(y_{i:n}|a_{i},z_{i})$:
\begin{equation}
\label{eq:likelihoodTF}
p^{N}_{}(y_{i:n}|a_{i},z_{i}) = \tilde{p}_{i}(y_{i:n}) \sum_{\ell=1}^N\frac{\tilde{\omega}^{\ell}_i p_{}(y_{i:n}|\tilde{a}^{\ell}_{i:n},z_{i})}{\int \gamma^{}_{i}(\tilde{a}^{\ell}_i,z')p_{}(y_{i:n}|\tilde{a}^{\ell}_{i:n},z')\rmd z'}\delta_{\tilde{a}^{\ell}_{i}}(a_{i})\eqsp.
\end{equation}
The conditional likelihood of the observations given the sequence of states $p_{}(y_{i:n}|a_{i:n},z_{i})$ can be computed explicitly using a Gaussian backward smoother; these computations are summarized in Lemma~\ref{lem:py}.
In the numerical experiments, $\gamma^{}_i(a_i,z_i)$ is set as a mixture of Gaussian distributions. Note that for such a choice, the integral $\int \gamma^{}_{i}(a_i,z')p_{}(y_{i:n}|a_{i:n},z')\rmd z'$ may be computed explicitly, see Lemma~\ref{lem:integral:gammap}. Then, combining \eqref{eq:likelihoodTF} and \eqref{eq:rN} with \eqref{eq:dens:bwd_fwd_doucet} provides an approximation of $p(a_{i},z_{i}|y_{1:n})$ by merging the forward particles $(a^k_{i-1})_{1\le k \le N}$ with the backward particles $(\tilde{a}^k_{i+1})_{1\le k \le N}$, the support of this SMC approximation of $p(a_{i},z_{i}|y_{1:n})$ being $(\tilde{a}^k_{i+1})_{1\le k \le N}$. 

As noted in \cite[Secion~2.6]{fearnhead:wyncoll:tawn:2010}, two-filter smoothers are prone to suffer from degeneracy issues when the algorithm associates forward particles at time $i-1$ with backward particles at time $i$. The authors illustrate this issue in the case where the hidden state is an AR(2) process. To overcome the weakness of such standard two-filter approaches the particle rejuvenation proposed in Section~\ref{sec:tf:rejuv} follows the idea introduced in \cite{fearnhead:wyncoll:tawn:2010} where new particles at time $i$ are sampled conditional on $(a^k_{1:i-1})_{1\le k \le N}$ and on $(\tilde{a}^k_{i+1:n})_{1\le k \le N}$ and appropriately weighted. This allows to produce new particles at time $i$ and to obtain a SMC approximation of $p(a_{i},z_{i}|y_{1:n})$ whose support is not restricted to $(\tilde{a}^k_{i+1})_{1\le k \le N}$. Section~\ref{sec:tf:rejuv} exploits this idea in the specific case of linear and Gaussian models where explicit computations allows to produce an approximation using $(a^k_{1:i-1})_{1\le k \le N}$ and $(\tilde{a}^k_{i+1:n})_{1\le k \le N}$ with support $\{1,\ldots,J\}$ and without any additional sampling steps.

\subsubsection{Particle rejuvenation of two-filter based algorithms}
\label{sec:tf:rejuv}
For $2\le i \le n-1$, particle rejuvenation relies on the explicit marginalization:
\begin{equation}
\label{eq:dens:bwd_fwd}
p(a_{i},z_{i}|y_{1:n})  =  \sum_{a_{i-1}}\sum_{a_{i+1}} \int_{z_{i-1}} \int_{z_{i+1}} \psi^n_{i}(a_{i-1:i+1},z_{i-1:i+1})\rmd z_{i-1}\rmd z_{i+1}\eqsp,
\end{equation}
where $\psi^n_{i}(a_{i-1:i+1},z_{i-1:i+1})$ is the smoothing distribution of the hidden regimes and states between time indices $i-1$ and $i+1$. Note that the EM algorithm requires the approximation of $p(a_{i-1},z_{i-1},a_{i-1},z_{i-1}|y_{1:n})$ in the E-step, this may be obtained following the same steps by marginalizing explicitly the linear states at time $i-2$ and $i+1$. Intermediate computations follow the same steps as  for the approximation of $p(a_{i},z_{i}|y_{1:n})$. First, $\psi^n_{i}$ may be decomposed as follows:
\begin{multline*}
\psi^n_{i}(a_{i-1:i+1},z_{i-1:i+1}) \propto p_{}(y_{i+1:n}|a_{i+1},z_{i+1})p_{}(a_{i-1},z_{i-1}|y_{1:i-1})Q(a_{i-1},a_{i})m_{}(a_{i},z_{i-1};z_{i})\\
\times g_{}(a_{i},z_{i};y_{i})Q(a_{i},a_{i+1})m_{}(a_{i+1},z_{i};z_{i+1})\eqsp,
\end{multline*}
where the proportionality is with respect to $(a_{i-1:i+1},z_{i-1:i+1})$. Then, by \eqref{eq:dens:bwd_fwd}, the smoothing distribution $p(a_{i},z_{i}|y_{1:n}) $ may be written as
\begin{equation}
\label{eq:smoothing}
p(a_{i},z_{i}|y_{1:n})   \propto p(a_{i},z_{i}|y_{1:i-1})g_{}(a_{i},z_{i};y_{i})t_{i}(a_i,z_i,y_{i+1:n})\eqsp,
\end{equation}
where $m$ and $g$ are defined in \eqref{eq:definition-m} and \eqref{eq:definition-g} and
\begin{equation}
t_{i}(a_i,z_i,y_{i+1:n}) = \sum_{a_{i+1}}\int_{z_{i+1}}m_{}(a_{i+1},z_{i};z_{i+1})Q(a_{i},a_{i+1})p_{}(y_{i+1:n}|a_{i+1},z_{i+1}) \rmd z_{i+1}\eqsp.\label{eq:smooth:t}
\end{equation}
The backward pass described in Section~\ref{eq:rbtf} produces a sequence of states $\tilde{a}_{i+1:n}^{\ell}$ associated with importance weights $\tilde{\omega}_{i+1}^{\ell}$, $1\le \ell\le N$ which are used to approximate $p_{}(y_{i+1:n}|a_{i+1},z_{i+1})$. Plugging this approximation into \eqref{eq:smooth:t} provides an approximation $t^N_{i}(a_i,z_i,y_{i+1:n})$ of $t_{i}(a_i,z_i,y_{i+1:n})$ integrating over all possible choices $(a_{i+1},z_{i+1})$. These steps are then combined to form a non normalized SMC approximation of $p(a_{i},z_{i}|y_{1:n})$ using \eqref{eq:smoothing}. The normalization of the SMC approximation of $p(a_{i},z_{i}|y_{1:n})$ is obtained by integrating over the states $a_{i},z_{i}$, when $p(a_{i},z_{i}|y_{1:i-1})$ and $t_{i}(a_i,z_i,y_{i+1:n})$ are replaced by $p^N(a_{i},z_{i}|y_{1:i-1})$ and $t^N_{i}(a_i,z_i,y_{i+1:n})$ in \eqref{eq:smoothing}. 
Our procedure allows to construct sequence of regimes with non-degenerated importance weights in the combination step. This procedure improves significantly \cite{briers:doucet:maskell:2010} where no marginalization of $p(a_{i},z_{i}|y_{1:n})$ over the states at times $i-1$ and $i+1$ is performed and where the proposed forward and backward paths are directly merged. This method often leads to importance weights which are close to be numerically degenerated. By Lemma~\ref{lem:py}, the SMC approximation $p^{N}_{}(y_{i:n}|a_{i},z_{i})$ of $p_{}(y_{i:n}|a_{i},z_{i})$ is then given by:
\begin{equation}
\label{eq:pN:yi:n}
p^{N}_{}(y_{i:n}|a_{i},z_{i}) = \tilde{p}_i(y_{i:n}) \sum_{\ell=1}^N  \frac{\delta_{\tilde{a}^{\ell}_i}(a_i)\tilde{\omega}_{i}^{\ell}}{\int \gamma^{}_{i}(\tilde{a}^{\ell}_i,z')p_{}(y_{i:n}|\tilde{a}^{\ell}_{i:n},z')\rmd z'} \exp\left\{- \frac{1}{2}\normMat{\tilde{P}^{\ell}_{i}}{z_i}^2 + z'_i\tilde{\nu}^{\ell}_{i} - \frac{1}{2} \tilde{c}^{\ell}_i \right\}\eqsp,
\end{equation}
where $(\tilde{P}^{\ell}_{i})^{-1} \eqdef \tilde{P}_{i}^{-1}(\tilde{a}^{\ell}_{i:n})$, $\tilde{\nu}^{\ell}_{i}\eqdef \tilde{\nu}_{i}(\tilde{a}^{\ell}_{i:n})$ and $\tilde{c}^{\ell}_{i} \eqdef \tilde{c}^{\ell}_{i}(\tilde{a}^{\ell}_{i:n})$ are defined in Lemma~\ref{lem:py}. Define
\begin{align*}
\Delta^{\ell}_{i+1} &\eqdef \left(I_{\dimz} + H'_{\tilde{a}^\ell_{i+1}}(\tilde{P}^{\ell}_{i+1})^{-1}H_{\tilde{a}^\ell_{i+1}}\right)^{-1}\eqsp,\\
\delta^{\ell}_{i+1}&\eqdef\tilde{\nu}^{\ell}_{i+1} + \barH_{\tilde{a}^\ell_{i+1}}^{-1}(d_{\tilde{a}^\ell_{i+1}}+T_{\tilde{a}^\ell_{i+1}}z_i)\eqsp.
\end{align*}
Then, by \eqref{eq:smooth:t}, the SMC approximation $t^N_{i}(a_i,z_i,y_{i+1:n})$ of $t_{i}(a_i,z_i,y_{i+1:n})$ is given by:
\begin{align}
t^N_{i}(a_i,z_i,y_{i+1:n}) &=\sum_{a_{i+1}=1}^J\int_{z_{i+1}}m_{}(a_{i+1},z_{i};z_{i+1})Q(a_{i},a_{i+1})p^N_{}(y_{i+1:n}|a_{i+1},z_{i+1}) \rmd z_{i+1}\eqsp,\nonumber\\
&=  \tilde{p}_{i+1}(y_{i+1:n}) \sum_{\ell=1}^N C_i^{-1}(\tilde{a}^{\ell}_{i+1:n}) Q(a_i, \tilde{a}^{\ell}_{i+1}) \tilde{\omega}^{\ell}_{i+1} |\barH_{\tilde{a}^{\ell}_{i+1}}|^{-1/2} |H_{\tilde{a}^{\ell}_{i+1}} \Delta^{\ell}_{i+1} H'_{\tilde{a}^{\ell}_{i+1}}|^{1/2} \eqsp\nonumber \\
&\hspace{2cm} \times \exp\left\{\frac{1}{2} (\delta^{\ell}_{i+1})'H_{\tilde{a}^{\ell}_{i+1}}\Delta^{\ell}_{i+1}H'_{\tilde{a}^{\ell}_{i+1}} \delta^{\ell}_{i+1} -\frac{1}{2}\normMat{\barH_{\tilde{a}^{\ell}_{i+1}}}{d_{\tilde{a}^{\ell}_{i+1}}+T_{\tilde{a}^{\ell}_{i+1}}z_i}^2\right\} \eqsp,\nonumber\\
&= \sum_{\ell=1}^N\tilde{\omega}_{\mathsf{b},i}^\ell(a_i)\exp\left\{-\frac{1}{2}\normMat{\tilde{S}_{i+1}^\ell}{z_i}^2+z_i'\tilde{s}_{i+1}^\ell\right\}\eqsp,\label{eq:tN}
\end{align}
where
\begin{align*}
C_i(\tilde{a}^{\ell}_{i+1:n}) &\eqdef \exp \left\{\tilde{c}^{\ell}_{i+1}/2\right\}  \int_{z_{i+1}} \gamma^{}_{i+1}(\tilde{a}^{\ell}_{i+1}, z)\tilde{p}(y_{i+1:n}|\tilde{a}^{\ell}_{i+1:n},z) \rmd z\eqsp,\\
\tilde{\omega}_{\mathsf{b},i}^\ell(a_i) &= \tilde{p}_{i+1}(y_{i+1:n})C_i(\tilde{a}^{\ell}_{i+1:n})^{-1}Q(a_i, \tilde{a}^{\ell}_{i+1}) \tilde{\omega}^{\ell}_{i+1} |\barH_{\tilde{a}^{\ell}_{i+1}}|^{-1/2} |H_{\tilde{a}^{\ell}_{i+1}} \Delta^{\ell}_{i+1} H'_{\tilde{a}^{\ell}_{i+1}}|^{1/2}\\
&\times\exp\{-d'_{\tilde{a}^\ell_{i+1}}\barH_{\tilde{a}^\ell_{i+1}}^{-1}d_{\tilde{a}^\ell_{i+1}}/2\}\exp\{(\tilde{\nu}^{\ell}_{i+1} + \barH_{\tilde{a}^\ell_{i+1}}^{-1}d_{\tilde{a}^\ell_{i+1}})'H_{\tilde{a}^{\ell}_{i+1}}\Delta^{\ell}_{i+1}H'_{\tilde{a}^{\ell}_{i+1}}(\tilde{\nu}^{\ell}_{i+1} + \barH_{\tilde{a}^\ell_{i+1}}^{-1}d_{\tilde{a}^\ell_{i+1}})/2\} \eqsp,\\
(\tilde{S}_{i+1}^\ell)^{-1} &= T'_{\tilde{a}^{\ell}_{i+1}}\barH_{\tilde{a}^\ell_{i+1}}^{-1}(T_{\tilde{a}^{\ell}_{i+1}}-H_{\tilde{a}^{\ell}_{i+1}}\Delta^{\ell}_{i+1}H'_{\tilde{a}^{\ell}_{i+1}}\barH_{\tilde{a}^\ell_{i+1}}^{-1}T_{\tilde{a}^{\ell}_{i+1}})\eqsp,\\
\tilde{s}_{i+1}^\ell &=T'_{\tilde{a}^{\ell}_{i+1}}\barH_{\tilde{a}^\ell_{i+1}}^{-1}(H_{\tilde{a}^{\ell}_{i+1}}\Delta^{\ell}_{i+1}H'_{\tilde{a}^{\ell}_{i+1}}(\tilde{\nu}^{\ell}_{i+1} + \barH_{\tilde{a}^\ell_{i+1}}^{-1}d_{\tilde{a}^\ell_{i+1}})-d_{\tilde{a}^\ell_{i+1}})\eqsp.
\end{align*}
In the numerical experiments, $\gamma^{}_i(a_i,z_i)$ is set as a mixture of Gaussian distributions. As explained in Section~\ref{eq:rbtf}, the integral $\int_{z_{i+1}} \gamma^{}_{i+1}(\tilde{a}^{\ell}_{i+1}, z)\tilde{p}(y_{i+1:n}|\tilde{a}^{\ell}_{i+1:n},z) \rmd z$ may be computed explicitly, see Lemma~\ref{lem:integral:gammap}.

\section{Simulated data}
\label{sec:numerical:experiments}
This section highlights the improvements brought by  the additional Rao-Blackwellization steps for the two-filter and the FFBS approximations of the marginal smoothing distributions in the case where the number of states is $J=2$. The transition matrix $Q$ is such that the probability of switching from one regime to the other is small, as expected for the WTI crude oil data, see Section~\ref{sec:exp}. First, the algorithms are applied to a simple one-dimensional model with:
\[
\pi_1=\pi_2 = 0.5\\;\;d_1 = 0.5\;\; d_2 = 0;\;c_1= 0.1\;\; c_2 = 0 \eqsp,
\]
\[
Q = \begin{pmatrix}0.99 & 0.01\\ 0.03 & 0.97\end{pmatrix}\;\;T_1 = T_2 = 1\;\;\barH_1 = \barH_2= 0.1\eqsp,
\]
\[
B_1 = B_2 =1\;\;\barG_1 = 0.3\;\;\barG_2 = 0.1\eqsp.
\]
The original FFBS algorithm of \cite{lindsten:bunch:sarkka:schon:godsill:2015} and the FFBS algorithm with rejuvenation proposed in this paper are used with $N = \tilde{N} = 25$. For comparable computational costs, the two-filter method of \cite{briers:doucet:maskell:2010} and the method with rejuvenation are run with $N=100$.  The artificial distributions are chosen as $\gamma^{}_i (a_i, z_i) = p^N(a_{i},z_{i}|y_{1:i-1})$ where $p^N(a_{i},z_{i}|y_{1:i-1})$ is defined by \eqref{eq:rN}. All these algorithms are compared to the estimation obtained with the proposed FFBS algorithm with rejuvenation and $5000$ particles considered as a benchmark value. Figure~\ref{fig:ffbs:err} displays the mean estimation error over $100$ independent Monte Carlo runs. The estimation error is defined as the absolute difference between the benchmark value and the estimations given by all algorithms.
\begin{figure}
\centering
\includegraphics[scale=.3]{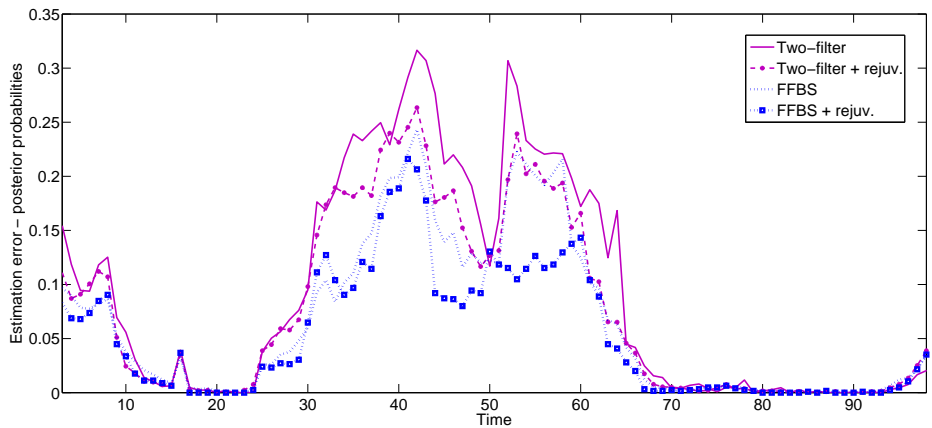}
\caption{Posterior probabilities estimation error for all algorithms.}
\label{fig:ffbs:err}
\end{figure}
In addition, Figure~\ref{fig:ffbs:var} displays the empirical variance of the estimation for each algorithm.
Figure~\ref{fig:ffbs:err} and  Figure~\ref{fig:ffbs:var} illustrate that in both cases the additional rejuvenation step improves the accuracy and the variability of SMC smoothers. In addition, even with a sharp choice for the artificial distributions $\gamma_i$, $1\le i\le n$,  FFBS based methods outperform two-filter based smoothers for this model. 
\begin{figure}
\centering
\includegraphics[scale=.3]{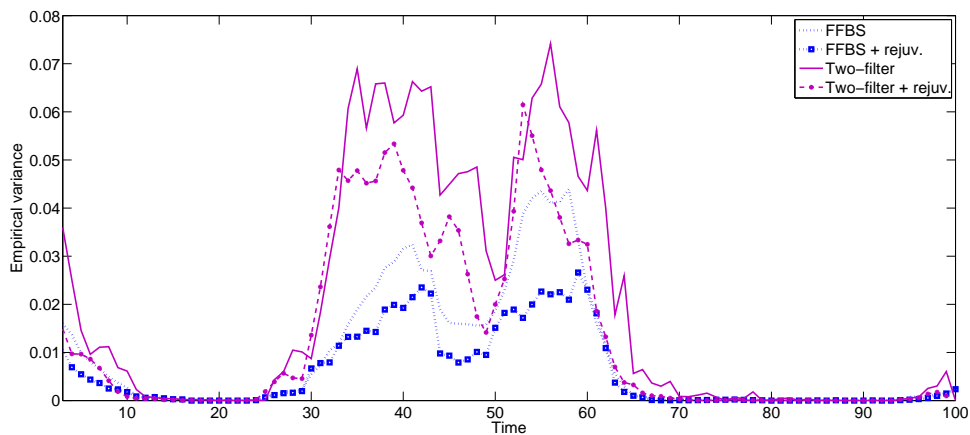}
\caption{Empirical variances of the estimation of $\mathbb{P}(a_k=1|Y_{1:n})$ for all algorithms.}
\label{fig:ffbs:var}
\end{figure}

\section{Application to CME crude oil (WTI)}
\label{sec:crudeoil}
\label{sec:exp}

\subsection{Model}
Modeling commodity prices is a crucial step to valuate contingent claims related to energy markets and to optimize storage or extraction strategies. In \cite{gibson:schwartz:1990,schwartz:1997}, the authors proposed a model where the spot price of a commodity$(S_t, t \geq 0)$ depends on a second factor $(\delta_t, t \geq 0)$, referred to as the instantaneous convenience yield. This factor plays the role of dividends in equity markets and models the benefit of holding the physical commodity or the storage and maintenance costs required to keep the commodity. In this model, this convenience yield is described as an Ornstein-Uhlenbeck process:
\begin{align*}
\rmd S_{t} & = (r-\delta_{t})S_{t}\rmd t+\sigma S_{t} \rmd W_{t}^{1} \eqsp,\\
\rmd \delta_{t} & = \kappa(\alpha-\delta_{t})\rmd t+\eta \rmd W_{t}^{2} \eqsp, \quad \rmd \langle W_t^1, W_t^2\rangle = \rho \rmd t\eqsp,
\end{align*}
where the parameter $\left(r, \sigma, \kappa, \alpha, \eta, \rho \right)$ are constant and $( (W_t^1,W_t^2), t \geq 0)$ are standard Brownian motions. This model appears to be too restrictive as energy markets are not likely to revert to a single equilibrium value. This assumption is relaxed using Markov switching models to allow several possible regimes for the spot price and the convenience yield. Following \cite{almansour:2016}, the spot price and convenience yield are described in this paper as:
\begin{align*}
\rmd S_{t} & = (r-\delta_{t})S_{t}\rmd t+\sigma_{a_t} S_{t}\rmd W_{t}^{1} \eqsp,\\
\rmd\delta_{t} & = \kappa (\alpha_{a_t} -\delta_{t})\rmd t+\eta_{a_t} \rmd W_{t}^{2} \eqsp, \quad \rmd\langle W_t^1, W_t^2\rangle = \rho_{a_t} \rmd t\eqsp,
\end{align*}
where $(a_t)_{t\ge 0}$ is a finite state space Markov process. This model allows to exhibit fundamental features of commodity future prices, which typically display different regimes of volatility and/or convenience yield. A two-regime model is already sufficient to produce stylized effects such as contango (increase of future prices) and backwardation (decrease of future prices).  Assuming that the switching rate between regimes is negligible compared to the inverse of the discretization period, the discretized version of the spot price and convenience yield $Z_i= (\ln S_i, \delta_i)$ (the sampling period is taken to be 1) is modeled as a CLGM. The explicit integration of this SDE detailed in Lemma~\ref{lem:integratedSDE} yields the following discrete time model for $(Z_i)_{i\ge 2}$:
\[
Z_i = d_{a_{i-1}} + T Z_{i-1} + H_{a_{i-1}} \varepsilon_i\eqsp,
\]
where (with $\barH_{a_{i-1}} \eqdef H_{a_{i-1}}H'_{a_{i-1}}$ and $\tau = t_i-t_{i-1}$):
\begin{align*}
d_{a_{i-1}} &\eqdef
\begin{pmatrix} \left[\mu- \alpha_{a_{i-1}} - \sigma^2_{a_{i-1}}/2 \right] \tau + \alpha_{a_{i-1}}[1-\rme^{-\kappa \tau}]/\kappa\\
\alpha_{a_{i-1}} [1-\rme^{-\kappa \tau}] \end{pmatrix}\eqsp,\\
T &\eqdef
\begin{pmatrix} 1 & -[1-\rme^{-\kappa \tau}]/\kappa \\ 0 & \rme^{-\kappa \tau}  \end{pmatrix} \eqsp,\\
\barH_{a_{i-1}}(1,1) &= \sigma^2_{a_{i-1}} \tau + \eta^2_{a_{i-1}}\left\{\tau  + (1-\rme^{-2\kappa \tau})/(2\kappa) - 2(1-\rme^{-\kappa \tau})/\kappa\right\}/\kappa^2 \\
&\hspace{4.4cm} - 2\rho_{a_{i-1}} \eta_{a_{i-1}} \sigma_{a_{i-1}}\left\{ \tau t_i - (1-\rme^{-\kappa \tau})/\kappa \right\}/\kappa\eqsp, \\
\barH_{a_{i-1}}(1,2) & = \left(\rho_{a_{i-1}} \eta_{a_{i-1}} \sigma_{a_{i-1}}-\eta^2_{a_{i-1}}/\kappa \right)\left(1-\rme^{-\kappa \tau} \right)/\kappa + \eta^2_{a_{i-1}}\left(1-\rme^{-2\kappa \tau} \right)/(2\kappa^2)\eqsp,\\
\barH_{a_{i-1}}(2,1) &= \barH_{a_{i-1}}(1,2)\eqsp, \quad \barH_{a_{i-1}}(2,2) = \eta^2_{a_{i-1}}\left( 1-\rme^{-2\kappa \tau} \right)/(2\kappa)\eqsp.
\end{align*}
The observations are Wednesday future contracts of the West Texas Intermediate crude oil (WTI) traded in the Chicago Mercantile Exchange (CME) from $11$ January 1995 to $13$ November 2013. The contracts are numbered $F_1, F_2, \ldots, F_{36}$ where $F_1$ (or front month) is the earliest delivery future contract, $F_2$ is the second earliest delivery future contract and so on. Among these $36$ contracts, the four future contracts: $F_1, F_4, F_6, F_{13}$ are used since their trading volumes and their impacts on the Term Structures are the most important ($F_1$ is the most liquid contract, $F_{13}$ characterizes the gap between prices over a one year period, $F_4$ and $F_6$ are intermediate future contracts that are mostly traded). As in \cite{almansour:2016}, we consider that each  future contract has a fixed time to maturity: $F_1, F_4, F_6, F_{13}$ have time to  maturity $4$, $16$, $26$, $56$ weeks.  Our time series contains $n =975$ weekly data with $534$ in backwardation and $441$ in contango (the backwardation effect is more frequent with crude oil data). At each time $t_i = i\tau$, with $\tau = 0.0192$, the observations of the $\dimy = 4$ future prices are $Y_i \eqdef (\ln(F^{(market)}_{i\tau t,m_1}), \ldots, \ln(F^{(market)}_{i\tau t,m_{\dimy}}))'$, where $F_{t_i,m}$ is the future price at $t_i$ for a maturity  $m$ weeks.
A closed form solution for $F_{t_i,m}$  may be written:
\[
F_{t_i,m} \eqdef \exp\left( \mathsf{A}_{m}(a_{i}) + \mathsf{B}_{m} Z_{i}\right)\eqsp,
\]
where $\mathsf{B}_0	= \begin{pmatrix}1 & 0 \end{pmatrix}$ and  $\mathsf{B}_m    = \mathsf{B}_{m-1} T$ so that  $ \mathsf{B}_m= \begin{pmatrix} 1 & -\left(1-\rme^{-\kappa m \tau t}\right)/\kappa \end{pmatrix}$ and
for all $1\le j\le J$, $\mathsf{A}_0(j) = 0$,  and
\begin{equation*}
\mathsf{A}_m(j) = \ln \left(\sum^J_{k=1} Q(j,k) \exp(\mathsf{A}_{m-1}(k)) \right) + \mathsf{B}_{m-1} d_j + \frac{1}{2} \mathsf{B}_{m-1} \barH_j B'_{m-1}\eqsp.
\end{equation*}
Therefore, the observations of the  logfuture prices are given, for all $1\le i\le n$, by:
\[
Y_i = c_{a_i} + B Z_i + G \eta_i\eqsp,
\]
where $\eta_i$ is a standard multivariate Gaussian random variable and:
\[
c_{j}' = [ \mathsf{A}_{m_1}(j), \dots, \mathsf{A}_{m_{\dimy}}(j) ]\eqsp,
\;\;\;
B' = [\mathsf{B}_{m_1}',  \dots, \mathsf{B}_{m_{\dimy}}']\eqsp, \;\;\;
G = \mathrm{diag}(g_1, \dots , g_{d} )\eqsp.
\]
The model depends of the parameters:
\[
\theta \eqdef \{\pi, Q, \mu_1, \Sigma_1, \kappa, (\alpha_j)_{1\leq j\leq J}, (\sigma_j)_{1\leq j\leq J}, (\eta_j)_{1\leq j\leq J}, (\rho_j)_{1\leq j\leq J}, (g_{\ell})_{1\leq \ell\leq d} \}\eqsp .
\]
The aim of this section is to estimate $\theta$ and the posterior probabilities $\mathbb{P}(a_k=j|Y_{1:n})$, $1\le k \le n$, $1\le j \le J$. Given the observations $Y_{1:n}$, the EM algorithm introduced in \cite{dempster:laird:rubin:1977} maximizes the incomplete data log-likelihood $\theta\mapsto \ell_{\theta}^{n}$ defined by
\begin{equation*}
\ell_{\theta}^{n}(Y_{1:n}) \eqdef \log\left(\sum_{a_1=1}^J\ldots\sum_{a_n=1}^J\int p_{\theta}(a_{1:n},z_{1:n},Y_{1:n})\,\rmd z_{1:n}\right)\eqsp,
\end{equation*}
where the complete data likelihood $p_{\theta}$ is given by
\begin{equation*}
p_{\theta}(a_{1:n},z_{1:n},Y_{1:n}) \eqdef \pi(a_1)\phi_{\mu_1,\Sigma_1}(z_1)g_{\theta}(a_1,z_1;Y_1)\prod^{n}_{i=2}Q(a_{i-1},a_i)m_{\theta}\left(a_{i},z_{i-1};z_i\right)g_{\theta}(a_i,,z_i;Y_i)\eqsp.
\end{equation*}
Denote by $\mathbb{E}_{\theta}\left[\cdot\middle|Y_{1:n}\right]$  the conditional expectation given $Y_{1:n}$ when the parameter value is set to $\theta$. The EM algorithm iteratively builds a sequence $\{\theta_{p}\}_{p\ge 0}$ of parameter estimates following the two steps:
\begin{enumerate}
	\item {\bf E-step}: compute $\theta \mapsto Q(\theta,\theta_{p})\eqdef \mathbb{E}_{\theta_p}\left[\log p_{\theta}(a_{1:n},Z_{1:n},Y_{1:n})\middle|Y_{1:n}\right]$ ;
	\item {\bf M-step}: choose $\theta_{p+1}$ as a maximizer of $\theta \mapsto Q(\theta,\theta_{p})$.
\end{enumerate}
All the conditional expectations involved in $Q(\theta,\theta_p)$ are approximated using our two-filter algorithm with rejuvenation to define the SMC approximation  $\theta\mapsto Q^N(\theta,\theta_p)$ of $\theta\mapsto Q(\theta,\theta_p)$. As the function $\theta\mapsto Q^N(\theta,\theta_p)$ cannot be maximized analytically, the M-step is performed numerically using the Covariance Matrix Adaptation Evolution Strategy (CMA-ES) introduced in \cite{hansen:ostermeier:2001}. This derivative-free optimization procedure is known to perform well in complex multimodal optimization settings, see e.g. \cite{hansen:kern:2004}.

\subsection{Numerical results}
The initial transition probability in CMA-ES is chosen as $Q(1,1) = 0.98, \; Q(2,2) = 0.97$ and $\pi_1 = \pi_2 = 0.5$ where the number $1$ represents the backwardation regime and $2$ represents the contango regime. The other parameters are initialized as shown in Table~\ref{fig:CMEinitialestimates}.
\begin{table}[h!]
	\centering
	\begin{tabular}{||c|c|c|c|c|c|c|c|c|c|c|c|c||}
		$\kappa$ & $\alpha_1$ & $\alpha_2$ & $\sigma_1$ & $\sigma_2$ & $\eta_1$ & $\eta_2$ & $\rho_1$ & $\rho_2$ & $g_1$ & $g_2$ & $g_3$ & $g_4$ \\
		\hline
		5.0 & 0.1 & -0.05 & 0.4 & 0.4 & 0.5 & 0.5 & 0.75 & 0.65 & 0.1 & 0.1 & 0.1 & 0.1
	\end{tabular}
	\caption{Initial values for the EM algorithm.}
	\label{fig:CMEinitialestimates}
\end{table}
The number of particles is set to $N = 100$, $\tau  = 1/52$. 
The interest rate is $r = 0.0296$ as in \cite{almansour:2016}. The initial guess for the mean and variance of the initial state are
\[
\mu_1 = \begin{pmatrix} \ln F^{(market)}_{1,4} & r-\cfrac{\ln F^{(market)}_{1,16}-\ln F^{(market)}_{1,4}}{(16-4)\tau} \end{pmatrix}\quad \mbox{and}\quad \Sigma_1 = \begin{pmatrix} 0.05 & 0 \\ 0 & 0.05 \end{pmatrix}\eqsp.
\]
The CMA-ES algorithm is used with an initial standard deviation for the parameters $\sigma_{cmaes} = 0.005$, a number of selected search points $\mu_{cmaes} = 20$ and a population size $\lambda_{cmaes} = 100$. The algorithm is stopped after $10000$ iterations.

In Gibson-Schwartz model \cite{gibson:schwartz:1990}, a stronger backwardation effect implies a greater value for $\alpha$ for the same values of the other parameters. For the CME WTI Crude Oil, backwardation effect is more frequent than contango effect so that $\alpha_1$ should be greater than $\alpha_2$. Therefore, this condition is imposed for all simulations in the CMA-ES algorithm. The results after $2500$ iterations of the EM algorithm are given in Table~\ref{fig:CMEfinalestimates}. The estimated values and standard deviations are obtained with $50$ independent runs of the algorithm.
\begin{table}[h!]
	\centering
	\begin{tabular}{||c|c|c|c|c|c|c|c|c||c|clc|c|c||}
	\hline
		Parameter&  $\kappa$ & $\sigma_1$ & $\sigma_2$ & $\eta_1$ & $\eta_2$ & $\rho_1$ & $\rho_2$  \\
		\hline
		Value & 2.6378 &0.3733 & 0.3485 & 0.5892 & 0.3814 & 0.8709 & 0.6761   \\
		\hline
		Std. Dev & 0.1999 & 0.005438 & 0.002884 & 0.0483 & 0.0349 & 0.0064 & 0.0052 \\  
		\hline
	\end{tabular}
	
	\vspace{.2cm}
	
	\begin{tabular}{||c|c|c|c|c|c|c|c|c|c|c|||}
	\hline
		Parameter & $\alpha_1$ & $\alpha_2$  & $g_1$ & $g_2$ & $g_3$ & $g_4$  & $Q(1,1)$ & $Q(2,2)$\\
		\hline
		Value & 0.0889 & -0.0281  & 2.3e-2 & 1.0e-4 & 3.0e-4 & 2.3e-2 & 0.9917 & 0.9880 \\
		\hline
		Std. Dev & 0.004248 & 0.00149 & 1.9e-4 & 2.6e-4 & 2.3e-4 & 2.1e-04 &  6.7e-4 & 9.6e-4\\
		\hline
	\end{tabular}
	\caption{Final estimates after 2500 iterations.}
	\label{fig:CMEfinalestimates}
\end{table}
As expected, we obtain $\sigma_1 \geq \sigma_2$, $\alpha_1 \geq \alpha_2$, $\eta_1 \geq \eta_2$ and $\rho_1 \geq \rho_2$ at convergence of the EM algorithm. Moreover, $Q(1,1) > Q(2,2)$ corresponds to the prediction that we did from the data description. The fact that $\sigma_1 \geq \sigma_2$ and $\alpha_1 \geq \alpha_2$ indicates the first regime (backwardation) characterized by a higher value in both volatility and equilibrium level of convenience yield, and the second regime (contango) characterized by a lower value in both volatility and equilibrium convenience yield level. This in accordance with the theory of storage that the volatility of the commodity spot price is high when the inventory is low, and the convenience yield is all the higher as inventory is low.

Figure~\ref{fig:LogPriceTermStructure:data} compares the evolution of future $1M$ (the nearest contracts) to the \emph{term structure} observed from CME WTI crude oil, defined as the difference of future $13M$ and future $1M$ (to avoid seasonality). The figure shows that it is not necessary to have an inverse relationship between the  price of the nearest contract and the term structure. But when a significant drop in the price  of the nearest contract occurs, the term structure increases (i.e. in contango).
\begin{figure}
\centering
\includegraphics[scale=.6,angle=90]{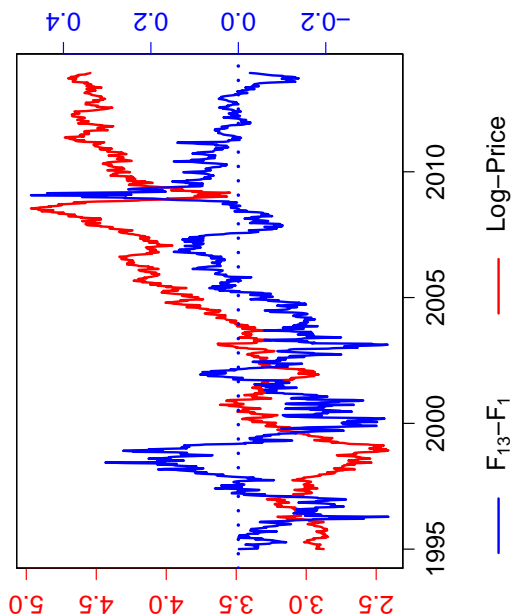}
\caption{Log-price (red line) and slope of future curves (blue line).}
\label{fig:LogPriceTermStructure:data}
\end{figure}

The correlation between the spot price and the convenience yield is positive and high in both two regimes. This is an accordance to what as been observed in most commodity market, see \cite{gibson:schwartz:1990}.  The slope of future curve decreases in function of maturity. 

Figure~\ref{fig:posterior:data} and~\ref{fig:posterior:data:logprice} display the the estimated posterior probabilities of the regimes and the observed future slope. When the future curve is in backwardation (resp. contango), the model is expected to be in the first regime (resp. second regime), except for the period where the slope of the future curve is too small and in the period from December $2008$ to April $2009$ (beginning of the crisis).

\begin{figure}
	\centering
	\includegraphics[scale=.75,angle=90]{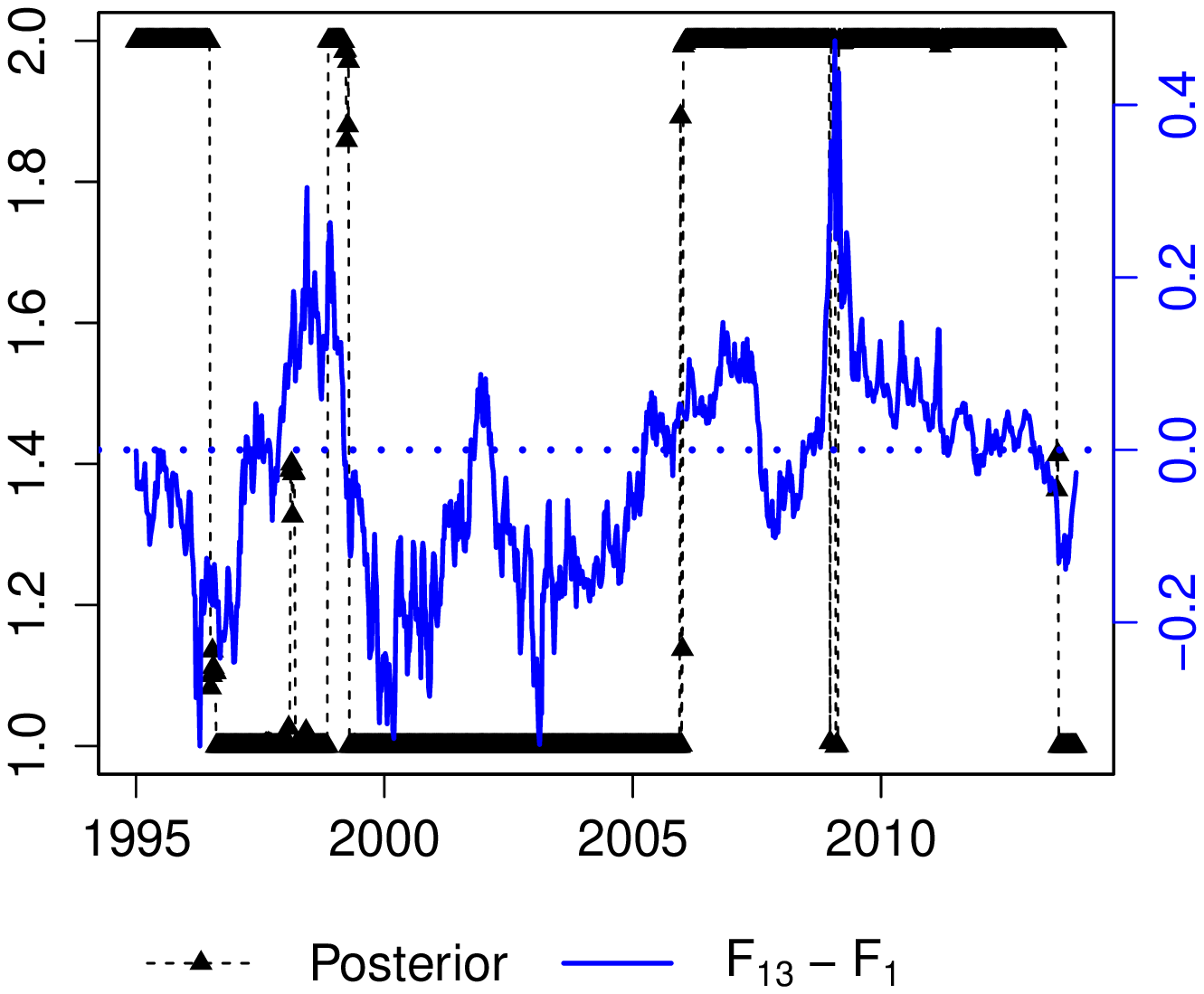}
	\caption{Posterior probability (triangle black) and slope of future curves (blue line).}
	\label{fig:posterior:data}
\end{figure}

\begin{figure}
	\centering
	\includegraphics[scale=.75]{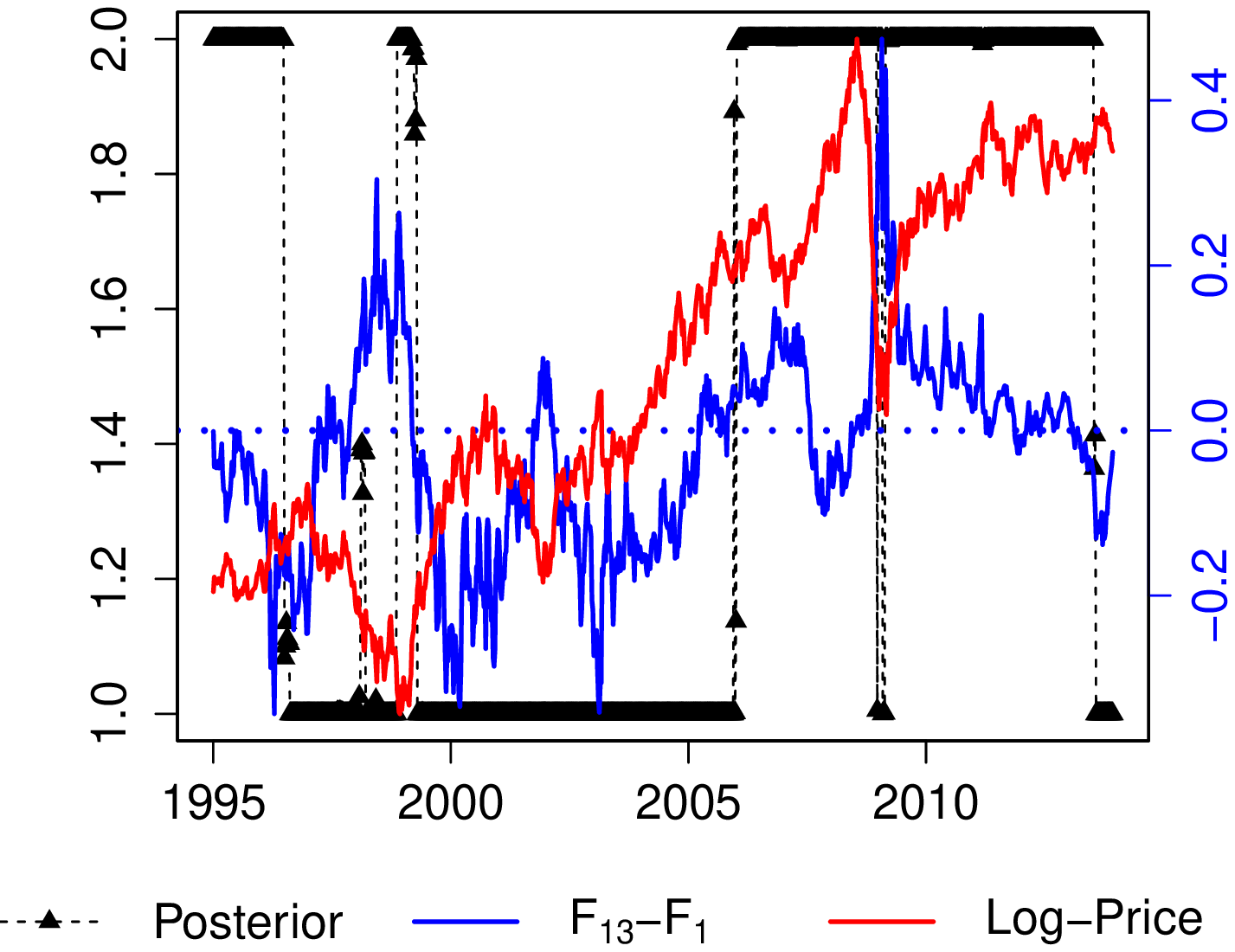}
	\caption{Posterior probability (triangle black) and slope of future curves (blue line).}
	\label{fig:posterior:data:logprice}
\end{figure}

\section{Conclusions}
This paper presents Rao-Blackwellized Sequential Monte Carlo methods to approximate smoothing distributions in conditionally linear and Gaussian state spaces in a common unifying framework. It also provides different techniques that could be used in the forward filtering pass to improve significantly the usual mixture Kalman filter. The filtering distributions are approximated at  each time step by considering all possible offsprings of all ancestral trajectories before discarding degenerated paths instead of resampling the ancestral paths before propagating them at the next time step. The paper investigates the benefit of additional Rao-Blackwellization steps to sample new regimes at each time step conditional on the forward and backward particles. This rejuvenation step uses explicit integration of the hidden linear states before merging the forward and backward filters for two-filter based algorithms or before sampling new states backward in time for FFBS based methods.  The paper displays some Monte Carlo experiments with simulated data to illustrate that this additional rejuvenation step improves the performance of the smoothing algorithms with no substantial additional computational costs. They are also applied to commodity markets using WTI crude oil data.

\appendix

\section{Technical lemmas}
\label{sec:app}
Lemmas~\ref{eq:technical:twofilter}, \ref{lem:py} and \ref{lem:integral:gammap} are close to \cite[Proposition 5, Proposition 6]{briers:doucet:maskell:2010}. The proofs are detailed in this appendix for completeness.
\begin{proof}[Proof of Lemma~\ref{eq:technical:twofilter}]
For all $1\le i \le n-1$,
\begin{align*}
p_{}(y_{i:n}|a_i,z_i) &= \sum_{a_{i+1:n}}\int p_{}(y_{i:n},a_{i+1:n},z_{i+1:n}|a_i,z_i)\rmd z_{i+1:n}\eqsp,\\
&=  \sum_{a_{i+1:n}}\int p_{}(a_{i+1:n},z_{i+1:n}|a_i,z_i)p_{}(y_{i:n}|a_{i:n},z_{i:n})\rmd z_{i+1:n}\eqsp,\\
&=  \frac{\tilde{p}_{i}(y_{i:n})}{\gamma^{}_{i}(a_i,z_i)}\sum_{a_{i+1:n}}\int \frac{\gamma^{}_{i}(a_i,z_i)}{\tilde{p}_{i}(y_{i:n})}p_{}(a_{i+1:n},z_{i+1:n}|a_i,z_i)p_{}(y_{i:n}|a_{i:n},z_{i:n})\rmd z_{i+1:n}\eqsp,\\
&=  \frac{\tilde{p}_{i}(y_{i:n})}{\gamma^{}_{i}(a_i,z_i)}\sum_{a_{i+1:n}}\int \tilde{p}_{i}(a_{i:n},z_{i:n}|y_{i:n})\rmd z_{i+1:n}\eqsp,\\
&=  \frac{\tilde{p}_{i}(y_{i:n})}{\gamma^{}_{i}(a_i,z_i)}\tilde{p}_{i}(a_{i},z_{i}|y_{i:n})\eqsp,
\end{align*}
which concludes the proof of \eqref{eq:p:y1:n}. To prove \eqref{eq:p:ai:zi} write,
\begin{align*}
\tilde{p}_{i}(a_{i:n},z_{i}|y_{i:n}) &= \frac{\gamma^{}_{i}(a_i,z_i)}{\tilde{p}_{i}(y_{i:n})}\int p_{}(y_{i:n}|a_{i:n},z_{i:n})p_{}(a_{i+1:n},z_{i+1:n}|a_{i},z_{i})\rmd z_{i+1:n}\eqsp,\\
&= \frac{\gamma^{}_{i}(a_i,z_i)}{\tilde{p}_{i}(y_{i:n})}\int \frac{p_{}(y_{i:n}|a_{i:n},z_{i})p_{}(z_{i+1:n}|y_{i:n},a_{i:n},z_i)}{p_{}(z_{i+1:n}|a_{i:n},z_i)}p_{}(a_{i+1:n},z_{i+1:n}|a_{i},z_{i})\rmd z_{i+1:n}\eqsp,\\
&=\frac{\gamma^{}_{i}(a_i,z_i)}{\tilde{p}_{i}(y_{i:n})}p_{}(y_{i:n}|a_{i:n},z_{i})p_{}(a_{i+1:n}|a_i)\eqsp.
\end{align*}
Therefore,
\[
\tilde{p}_{i}(a_{i},z_{i}|y_{i:n}) = \frac{\gamma^{}_{i}(a_i,z_i)}{\tilde{p}_{i}(y_{i:n})}\sum_{a_{i+1:n}}p_{}(y_{i:n}|a_{i:n},z_{i})p_{}(a_{i+1:n}|a_i)
\]
and the proof is completed upon noting that
\[
\tilde{p}_{i}(a_{i:n}|y_{i:n}) = \frac{p_{}(a_{i+1:n}|a_i)}{\tilde{p}_{i}(y_{i:n})} \int \gamma^{}_{i}(a_i,z)p_{}(y_{i:n}|a_{i:n},z)\rmd z\eqsp.
\]
\end{proof}

\begin{lemma}
\label{lem:py}
For all $1\le i \le n$,
\begin{equation}
\label{eq:py}
p_{}(y_{i:n}|a_{i:n},z_i) =  \exp\left\{-\frac{1}{2}\tilde{c}_{i}(a_{i:n}) - \frac{1}{2}\normMat{\tilde{P}_{i}(a_{i:n})}{z_i}^2 + z'_i\tilde{\nu}_{i}(a_{i:n})\right\}\eqsp,
\end{equation}
where
\begin{align}
\tilde{c}_{n}(a_{n}) &= \dimy\log(2\pi) + \log \left|\barG_{a_n}\right| + \normMat{\barG_{a_n}}{y_n-c_{a_n}}^2\eqsp,\label{eq:cn}\\
\tilde{P}_{n}^{-1}(a_{n}) &=B'_{a_n}\barG_{a_n}^{-1}B_{a_n}\eqsp,\label{eq:Pn}\\
\tilde{\nu}_{n}(a_{n})&=B'_{a_n}\barG_{a_n}^{-1}(y_n-c_{a_n})\label{eq:nun}
\end{align}
and, for all $1\le i\le n-1$,
\begin{align}
\tilde{c}_{i}(a_{i:n}) &= \tilde{c}_{i|i+1}(a_{i+1:n}) + \dimy\log(2\pi)  + \log|\barG_{a_i}| + \normMat{\barG_{a_i}}{y_i-c_{a_i}}^2\eqsp,\label{eq:ck}\\
\tilde{P}_{i}^{-1}(a_{i:n}) &= \tilde{P}_{i|i+1}^{-1}(a_{i+1:n}) + B'_{a_i}\barG_{a_i}^{-1}B_{a_i}\eqsp,\label{eq:Pk}\\
\tilde{\nu}_{i}(a_{i:n})&=\tilde{\nu}_{i|i+1}(a_{i+1:n}) + B'_{a_i}\barG_{a_i}^{-1}(y_{i}-c_{a_{i}})\eqsp,\label{eq:nuk}
\end{align}
with
\begin{align*}
\Delta_{i+1}(a_{i+1:n}) &= \left(I_{\dimz} + H'_{a_{i+1}}\tilde{P}_{i+1}^{-1}(a_{i+1:n})H_{a_{i+1}}\right)^{-1}\eqsp,\\
\tilde{r}_{i|i+1}(a_{i+1:n})&= \tilde{\nu}_{i+1}(a_{i+1:n}) + \barH_{a_{i+1}}^{-1}d_{a_{i+1}}\eqsp,\\
\tilde{c}_{i|i+1}(a_{i+1:n}) &= \tilde{c}_{i+1}(a_{i+1:n}) + \log|\barH_{a_{i+1}}| + d'_{a_{i+1}}\barH_{a_{i+1}}^{-1}d_{a_{i+1}}-\log|H_{a_{i+1}}\Delta_{i}(a_{i+1:n})H'_{a_{i+1}}|\\
&\hspace{4.0cm} - \tilde{r}'_{i|i+1}(a_{i+1:n})H_{a_{i+1}}\Delta_{i}(a_{i+1:n})H'_{a_{i+1}}\tilde{r}_{i|i+1}(a_{i+1:n})\eqsp,\\
\tilde{P}_{i|i+1}^{-1}(a_{i+1:n}) &=T'_{a_{i+1}}\left(I_{\dimz}-\barH_{a_{i+1}}^{-1}H_{a_{i+1}}\Delta_{i}(a_{i+1:n})H'_{a_{i+1}}\right)\barH_{a_{i+1}}^{-1}T_{a_{i+1}}\eqsp,\\
\tilde{\nu}_{i|i+1}(a_{i+1:n})&=T'_{a_{i+1}}\barH_{a_{i+1}}^{-1}\left[-d_{a_{i+1}}+H_{a_{i+1}}\Delta_{i}(a_{i+1:n})H'_{a_{i+1}}\left(\tilde{\nu}_{i+1}(a_{i+1:n})+\barH_{a_{i+1}}^{-1}d_{a_{i+1}}\right)\right]\eqsp.
\end{align*}
\end{lemma}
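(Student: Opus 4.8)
The plan is to prove \eqref{eq:py} by descending induction on $i$, from $i=n$ down to $i=1$, showing at each step that $p_{}(y_{i:n}|a_{i:n},z_i)$, viewed as a function of $z_i$, is the exponential of a quadratic form with the stated coefficients. The whole argument rests on the factorization, valid by the Markov and conditional-independence structure of the CLGM (given $(a_{i+1:n},z_{i+1})$ the observations $y_{i+1:n}$ are independent of $(a_i,z_i)$, and $p_{}(z_{i+1}|a_{i:n},z_i)=m_{}(a_{i+1},z_i;z_{i+1})$),
\[
p_{}(y_{i:n}|a_{i:n},z_i) = g_{}(a_i,z_i;y_i)\int m_{}(a_{i+1},z_i;z_{i+1})\,p_{}(y_{i+1:n}|a_{i+1:n},z_{i+1})\,\rmd z_{i+1}\eqsp,
\]
with $m$ and $g$ given by \eqref{eq:definition-m}--\eqref{eq:definition-g}. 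For the base case $i=n$ I would simply note that $p_{}(y_n|a_n,z_n)=g_{}(a_n,z_n;y_n)$ and expand $\normMat{\barG_{a_n}}{y_n-c_{a_n}-B_{a_n}z_n}^2$: the $z_n$-quadratic, $z_n$-linear and constant parts yield $\tilde{P}_n^{-1}$ and $\tilde{\nu}_n$ of \eqref{eq:Pn}--\eqref{eq:nun}, while $|2\pi\barG_{a_n}|^{-1/2}=\exp\{-\tfrac12(\dimy\log(2\pi)+\log|\barG_{a_n}|)\}$ supplies the rest of $\tilde{c}_n$ in \eqref{eq:cn}.

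For the inductive step, I would insert the inductive form \eqref{eq:py} for $p_{}(y_{i+1:n}|a_{i+1:n},z_{i+1})$ into the integral and collect the exponent as a quadratic in $z_{i+1}$. Writing $\mu\eqdef d_{a_{i+1}}+T_{a_{i+1}}z_i$, this quadratic has precision $A\eqdef\barH_{a_{i+1}}^{-1}+\tilde{P}_{i+1}^{-1}(a_{i+1:n})$ and linear coefficient $b\eqdef\barH_{a_{i+1}}^{-1}\mu+\tilde{\nu}_{i+1}(a_{i+1:n})$; Gaussian integration in $z_{i+1}$ then produces the factor $|2\pi A^{-1}|^{1/2}$ together with the completed square $\tfrac12 b'A^{-1}b$ and the leftover $-\tfrac12\mu'\barH_{a_{i+1}}^{-1}\mu$. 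Here the definition $\tilde{r}_{i|i+1}=\tilde{\nu}_{i+1}+\barH_{a_{i+1}}^{-1}d_{a_{i+1}}$ isolates the part of $b$ that is independent of $z_i$.

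The crux, and what I expect to be the main obstacle, is the matrix-algebra identification $A^{-1}=H_{a_{i+1}}\Delta_{i+1}H'_{a_{i+1}}$ with $\Delta_{i+1}=(I_{\dimz}+H'_{a_{i+1}}\tilde{P}_{i+1}^{-1}H_{a_{i+1}})^{-1}$, which follows from the push-through identity $H_{a_{i+1}}(I_{\dimz}+H'_{a_{i+1}}\tilde{P}_{i+1}^{-1}H_{a_{i+1}})^{-1}=(I_{\dimz}+\barH_{a_{i+1}}\tilde{P}_{i+1}^{-1})^{-1}H_{a_{i+1}}$ using $\barH_{a_{i+1}}=H_{a_{i+1}}H'_{a_{i+1}}$. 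Substituting $\mu=d_{a_{i+1}}+T_{a_{i+1}}z_i$ and separating the $z_i$-quadratic, $z_i$-linear and constant contributions of $-\tfrac12\mu'\barH_{a_{i+1}}^{-1}\mu+\tfrac12 b'A^{-1}b$ then yields exactly $\tilde{P}_{i|i+1}^{-1}$, $\tilde{\nu}_{i|i+1}$ and $\tilde{c}_{i|i+1}$, the determinant term $\log|H_{a_{i+1}}\Delta_{i+1}H'_{a_{i+1}}|$ arising from $|2\pi A^{-1}|^{1/2}$ against the $|2\pi\barH_{a_{i+1}}|^{-1/2}$ normalisation of $m$.

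Finally, multiplying by $g_{}(a_i,z_i;y_i)$ adds $B'_{a_i}\barG_{a_i}^{-1}B_{a_i}$ to the precision, $B'_{a_i}\barG_{a_i}^{-1}(y_i-c_{a_i})$ to the linear coefficient, and the corresponding determinant and quadratic constants to the scalar term, which gives \eqref{eq:ck}--\eqref{eq:nuk} and closes the induction. The remainder is bookkeeping: one must carefully track the $z_i$-dependence carried through $\mu$ and keep the symmetric quadratic, the cross term, and the purely constant pieces of the completed square correctly separated so that they land in $\tilde{P}_{i|i+1}^{-1}$, $\tilde{\nu}_{i|i+1}$ and $\tilde{c}_{i|i+1}$ respectively.
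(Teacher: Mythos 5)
Your proposal is correct and follows essentially the same route as the paper: backward induction from $i=n$, the factorization $p(y_{i:n}|a_{i:n},z_i)=g(a_i,z_i;y_i)\int m(a_{i+1},z_i;z_{i+1})p(y_{i+1:n}|a_{i+1:n},z_{i+1})\,\rmd z_{i+1}$, and Gaussian integration in $z_{i+1}$ using the identity $\barH_{a_{i+1}}^{-1}+\tilde{P}_{i+1}^{-1}(a_{i+1:n})=\left(H_{a_{i+1}}\Delta_{i+1}(a_{i+1:n})H'_{a_{i+1}}\right)^{-1}$ (your $b$ is exactly the paper's $\delta_{i+1}$). The only cosmetic difference is that you justify this matrix identity explicitly via the push-through identity, whereas the paper simply asserts it.
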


\begin{proof}
The result is proved by backward induction. \eqref{eq:cn}, \eqref{eq:Pn} and \eqref{eq:nun} follow directly from \eqref{eq:model:obs}. Assume that for a given $1\le i\le n-1$, $p(y_{i+1:n}|a_{i+1:n},z_{i+1})$ is given by \eqref{eq:py}. Write
\[
p(y_{i:n}|a_{i:n},z_i) = \int m(a_{i+1},z_i;z_{i+1})g(a_i,z_i;y_i)p(y_{i+1:n}|a_{i+1:n},z_{i+1})\rmd z_{i+1}\eqsp,
\]
with
\begin{align*}
m(a_{i+1},z_i;z_{i+1}) &= \exp\left\{-\frac{\dimz}{2}\log(2\pi)-\frac{1}{2}\log|\barH_{a_{i+1}}|-\frac{1}{2}\normMat{\barH_{a_{i+1}}}{z_{i+1}-d_{a_{i+1}}-T_{a_{i+1}}z_i}^2\right\}\eqsp,\\
g(a_i,z_i;y_i)& = \exp\left\{-\frac{\dimy}{2}\log(2\pi)-\frac{1}{2}\log|\barG_{a_i}|-\frac{1}{2}\normMat{\barG_{a_i}}{y_{i}-c_{a_{i}}-B_{a_i}z_i}^2\right\}\eqsp,\\
p(y_{i+1:n}|a_{i+1:n},z_{i+1})& = \exp\left\{-\frac{1}{2}c_{i+1}(a_{i+1:n}) - \frac{1}{2}\normMat{\tilde{P}_{i+1}(a_{i+1:n})}{z_{i+1}}^2 + z'_{i+1}\tilde{\nu}_{i+1}(a_{i+1:n})\right\}\eqsp.
\end{align*}
Let $\Delta_{i+1}$ and  $\delta_{i+1}$ be given by:
\begin{align*}
\Delta_{i+1}(a_{i+1:n}) &\eqdef \left(I_{\dimz} + H'_{a_{i+1}}\tilde{P}_{i+1}^{-1}(a_{i+1:n})H_{a_{i+1}}\right)^{-1}\eqsp,\\
\delta_{i+1}(a_{i+1:n}) &\eqdef \nu_{i+1}(a_{i+1:n}) + \barH_{a_{i+1}}^{-1}(d_{a_{i+1}}+T_{a_{i+1}}z_i)\eqsp.
\end{align*}
Then, $\barH_{a_{i+1}}^{-1} + \tilde{P}_{i+1}^{-1}(a_{i+1:n}) = \left(H_{a_{i+1}}\Delta_{i+1}(a_{i+1:n})H'_{a_{i+1}}\right)^{-1}$ and \eqref{eq:ck}, \eqref{eq:Pk} and \eqref{eq:nuk} follows from
\begin{multline*}
\int \exp\left\{-\frac{1}{2}\normMat{H_{a_{i+1}}\Delta_{i+1}(a_{i+1:n})H'_{a_{i+1}}}{z_{i+1}}^2 + z'_{i+1}\delta_{i+1}(a_{i+1:n})\right\}\rmd z_{i+1} \\
=\exp\left\{\frac{1}{2}\log(2\pi) + \frac{1}{2}\log|H_{a_{i+1}}\Delta_{i+1}(a_{i+1:n})H'_{a_{i+1}}|\right\}\\
\times\exp\left\{\frac{1}{2}\delta'_{i+1}(a_{i+1:n})'H_{a_{i+1}}\Delta_{i+1}(a_{i+1:n})H'_{a_{i+1}}\delta_{i+1}(a_{i+1:n})\right\}\eqsp.
\end{multline*}
\end{proof}

\begin{lemma}
\label{lem:integral:gammap}
For all $1\le i\le n$,
\begin{multline*}
\int \phi_{\mu_i,\Sigma_i}(z_i)p_{}(y_{i:n}|a_{i:n},z_i)\rmd z_{i} = \exp\left\{-\frac{1}{2}\log |\Sigma_i| - \frac{1}{2}\mu_i'\Sigma^{-1}_i\mu_i\right\}\\
\times\exp\left\{-\frac{1}{2}\tilde{c}_{i}(a_{i:n})+\frac{1}{2}\log|\tilde{\Omega}_{i}(a_{i:n})| +\frac{1}{2}\tilde{z}'_i(a_{i:n})\tilde{\Omega}_{i}(a_{i:n})\tilde{z}_i(a_{i:n})\right\}\eqsp,
\end{multline*}
where $\phi_{\mu,\Sigma}$ is the probability density function of a $\dimz$ dimensional Gaussian random variable with mean $\mu$ and variance matrix $\Sigma$ and
\[
\tilde{\Omega}_{i}(a_{i:n})\eqdef \left(\Sigma_i^{-1} + \tilde{P}_{i}^{-1}(a_{i:n})\right)^{-1}\quad\mbox{and}\quad
\tilde{z}_i(a_{i:n})\eqdef \Sigma_i^{-1}\mu_i+\tilde{\nu}_{i}(a_{i:n})
\]
and where $c_{i}$, $\tilde{P}_{i}$ and $\nu_{i}$ are given in Lemma~\ref{lem:py}.
\end{lemma}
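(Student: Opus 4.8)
The plan is to reduce the left-hand side to a single multivariate Gaussian integration by inserting the closed form of $p_{}(y_{i:n}|a_{i:n},z_i)$ supplied by Lemma~\ref{lem:py}. Once substituted, the integrand is the product of the Gaussian density $\phi_{\mu_i,\Sigma_i}(z_i)$ with an exponential whose exponent is $-\frac12\tilde{c}_{i}(a_{i:n}) - \frac12\normMat{\tilde{P}_{i}(a_{i:n})}{z_i}^2 + z_i'\tilde{\nu}_{i}(a_{i:n})$; both factors are Gaussian kernels in $z_i$, so their product is again a Gaussian kernel and the integral is computed by completing the square. Note that no backward induction is needed here, since the induction has already been carried out in Lemma~\ref{lem:py}; this statement is a one-step marginalization.

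First I would write $\phi_{\mu_i,\Sigma_i}(z_i) = |2\pi\Sigma_i|^{-1/2}\exp\{-\frac12(z_i-\mu_i)'\Sigma_i^{-1}(z_i-\mu_i)\}$ and expand its exponent as $-\frac12 z_i'\Sigma_i^{-1}z_i + z_i'\Sigma_i^{-1}\mu_i - \frac12\mu_i'\Sigma_i^{-1}\mu_i$. Combining with the exponent from Lemma~\ref{lem:py} and collecting the terms in $z_i$, the quadratic part is $-\frac12 z_i'(\Sigma_i^{-1}+\tilde{P}_i^{-1}(a_{i:n}))z_i$, whose matrix is exactly $\tilde{\Omega}_i^{-1}(a_{i:n})$ by the definition $\tilde{\Omega}_{i}(a_{i:n})\eqdef(\Sigma_i^{-1} + \tilde{P}_{i}^{-1}(a_{i:n}))^{-1}$, while the linear part is $z_i'(\Sigma_i^{-1}\mu_i + \tilde{\nu}_{i}(a_{i:n}))$, whose vector is exactly $\tilde{z}_i(a_{i:n})\eqdef\Sigma_i^{-1}\mu_i+\tilde{\nu}_{i}(a_{i:n})$. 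The remaining pieces $-\frac12\log|\Sigma_i| - \frac12\mu_i'\Sigma_i^{-1}\mu_i - \frac12\tilde{c}_{i}(a_{i:n})$, together with the $(2\pi)$ prefactor, do not depend on $z_i$ and pull out of the integral.

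It then remains to evaluate $\int \exp\{-\frac12 z_i'\tilde{\Omega}_i^{-1}(a_{i:n})z_i + z_i'\tilde{z}_i(a_{i:n})\}\rmd z_i$, which by the standard Gaussian integral identity equals $(2\pi)^{\dimz/2}|\tilde{\Omega}_i(a_{i:n})|^{1/2}\exp\{\frac12\tilde{z}_i'(a_{i:n})\tilde{\Omega}_i(a_{i:n})\tilde{z}_i(a_{i:n})\}$. Multiplying back the pulled-out factors, the $(2\pi)^{\dimz/2}$ produced by the integral cancels against the $(2\pi)^{-\dimz/2}$ hidden in $|2\pi\Sigma_i|^{-1/2}=(2\pi)^{-\dimz/2}|\Sigma_i|^{-1/2}$, leaving $|\Sigma_i|^{-1/2}=\exp\{-\frac12\log|\Sigma_i|\}$ and $|\tilde{\Omega}_i|^{1/2}=\exp\{\frac12\log|\tilde{\Omega}_i|\}$, which reassembles precisely the claimed right-hand side.

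The computation is entirely routine; the only point requiring care is the bookkeeping of the $2\pi$ and determinant factors. In particular, one must verify that the dimensional $(2\pi)^{\dimz/2}$ factors cancel exactly and that the determinant of $\tilde{\Omega}_i$ enters with the correct sign (coming from $|\tilde{\Omega}_i^{-1}|^{-1/2}=|\tilde{\Omega}_i|^{1/2}$), so I would track all normalization constants explicitly rather than suppressing them. This is the main, and essentially only, source of potential slips.
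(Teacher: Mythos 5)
Your proof is correct and follows exactly the same route as the paper: substitute the closed form from Lemma~\ref{lem:py}, combine the two Gaussian kernels in $z_i$, and evaluate the resulting integral by completing the square, with the $(2\pi)^{\dimz/2}$ factors cancelling as you describe. No gaps.
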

\begin{proof}
By Lemma~\ref{lem:py},
\begin{multline*}
\phi_{\mu_i,\Sigma_i}(z_i)p(y_{i:n}|a_{i:n},z_i) = \exp\left\{-\frac{\dimz}{2}\log(2\pi) -\frac{1}{2}\log|\Sigma_i|-\frac{1}{2}\normMat{\Sigma_i}{z_i-\mu_i}^2\right\}\\
\times\exp\left\{-\frac{1}{2}c_{i}(a_{i:n})-\frac{1}{2}\normMat{\tilde{P}_{i}(a_{i:n})}{z_i}^2+z'_i\nu_{i}(a_{i:n})\right\}\eqsp.
\end{multline*}
The proof is completed noting that
\begin{multline*}
\int \exp\left\{-\frac{1}{2}z'_i\tilde{\Omega}_{i}^{-1}(a_{i:n})z_i + z'_i(\Sigma_i^{-1}\mu_i+\nu_{i}(a_{i:n}))\right\}\rmd z_{i}\\
= \exp\left\{\frac{\dimz}{2}\log(2\pi) + \frac{1}{2}\log|\tilde{\Omega}_{i}(a_{i:n})| +\frac{1}{2}\left[\Sigma_i^{-1}\mu_i+\nu_{i}(a_{i:n})\right]'\tilde{\Omega}_{i}(a_{i:n})\left[\Sigma_i^{-1}\mu_i+\nu_{i}(a_{i:n})\right]\right\}\eqsp.
\end{multline*}\end{proof}

\begin{lemma}
\label{lem:integratedSDE}
Let $(X_t,\delta_t)_{t\ge0}$ be solutions to the following SDE:
\begin{align*}
\rmd X_t &= \left(\mu - \delta_t- \sigma^2/2 \right) dt + \sigma d W^1_t \eqsp,\\
\rmd \delta_t &= \kappa\left( \alpha - \delta_t\right)dt + \eta dW^2_t \eqsp,
\end{align*}
$(W_t^1)_{t\ge 0}$ and $(W_t^2)_{t\ge 0}$ are standard Brownian motions such that $\rmd \langle W_t^1,W_t^2 \rangle = \rho\rmd t$.
Then, for all $t\ge 0$ and $h>0$,
\[
\begin{pmatrix} X_{t+h} \\ \delta_{t+h}\end{pmatrix} = d_h + T_h\begin{pmatrix} X_{t} \\ \delta_{t}\end{pmatrix} + H_h\varepsilon\eqsp,
\]
where $\varepsilon$ is a standard 2-dimensional Gaussian random variable and (with $\barH_h\eqdef H_h'H_h$),
\[
d \eqdef
\begin{pmatrix} \left[\mu- \alpha - \sigma^2/2 \right] h + \alpha[1-e^{-\kappa h}]/\kappa\\
\alpha [1-e^{-\kappa h}] \end{pmatrix}\eqsp, \;\; T_h \eqdef
\begin{pmatrix} 1 & -[1-e^{-\kappa h}]/\kappa \\ 0 & e^{-\kappa h} \end{pmatrix} \eqsp,
\]
\begin{align*}
\barH_h(1,1) & \eqdef \sigma^2 h+ \eta^2\left\{h + (1-e^{-2\kappa h})/(2\kappa) - 2(1-e^{-\kappa h})/\kappa  \right\}/\kappa^2 \\
&\hspace{7cm} -2\rho\eta\sigma\left\{h - (1-e^{-\kappa h})/\kappa \right\}/\kappa\eqsp, \\
\barH_h(1,2) & \eqdef \left(\rho \eta \sigma-\eta^2/\kappa\right)\left(1-e^{-\kappa h} \right)/\kappa + \eta^2\left(1-e^{-2\kappa h} \right)/(2\kappa^2)\eqsp,\\
\barH_h(2,1) & \eqdef \barH_h(1,2)\eqsp,\\
\barH_h(2,2) & \eqdef \eta^2\left( 1-e^{-2\kappa h} \right)/(2\kappa)\eqsp.
\end{align*}
\end{lemma}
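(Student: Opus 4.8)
The plan is to integrate the linear system explicitly, exploiting the fact that the convenience yield $\delta$ obeys an autonomous Ornstein--Uhlenbeck equation that can be solved in closed form and then fed into the equation for $X$. Fix $t\ge 0$ and $h>0$; I would solve for $\delta$ first and integrate $X$ afterwards.

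First I would treat the second equation. Multiplying by the integrating factor $\rme^{\kappa s}$ and integrating on $[t,t+h]$ gives
\[
\delta_{t+h} = \alpha(1-\rme^{-\kappa h}) + \rme^{-\kappa h}\delta_t + \eta\int_t^{t+h}\rme^{-\kappa(t+h-s)}\rmd W^2_s\eqsp.
\]
This immediately yields the second rows of $d_h$ and $T_h$ and identifies the noise $N_\delta \eqdef \eta\int_t^{t+h}\rme^{-\kappa(t+h-s)}\rmd W^2_s$. For the first equation I would integrate directly,
\[
X_{t+h} = X_t + (\mu-\sigma^2/2)h - \int_t^{t+h}\delta_s\,\rmd s + \sigma\int_t^{t+h}\rmd W^1_s\eqsp,
\]
and substitute the explicit form of $\delta_s$. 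The deterministic part of $\int_t^{t+h}\delta_s\,\rmd s$ produces the first entry of $d_h$ and the entry $T_h(1,2) = -(1-\rme^{-\kappa h})/\kappa$, while the stochastic part requires exchanging the order of the $\rmd s$ and $\rmd W^2$ integrals. Applying the stochastic Fubini theorem to $\int_t^{t+h}\!\int_t^{s}\rme^{-\kappa(s-u)}\rmd W^2_u\,\rmd s$ rewrites it as a single Wiener integral $\int_t^{t+h}(1-\rme^{-\kappa(t+h-u)})\kappa^{-1}\rmd W^2_u$, so that the noise attached to $X_{t+h}$ is $N_X \eqdef \sigma\int_t^{t+h}\rmd W^1_u - (\eta/\kappa)\int_t^{t+h}(1-\rme^{-\kappa(t+h-u)})\rmd W^2_u$.

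Since $N_X$ and $N_\delta$ are Wiener integrals of deterministic integrands against the pair $(W^1,W^2)$, the vector $(N_X,N_\delta)$ is centred Gaussian and independent of $\mathcal{F}_t$; it then only remains to compute its $2\times 2$ covariance matrix, which is $\barH_h$. I would obtain the three entries by the It\^o isometry, keeping in mind that cross-terms mixing $\rmd W^1$ and $\rmd W^2$ carry a factor $\rho$ because $\rmd\langle W^1,W^2\rangle = \rho\,\rmd s$. Concretely, $\barH_h(2,2)$ comes from $\eta^2\int_t^{t+h}\rme^{-2\kappa(t+h-s)}\rmd s$; the entry $\barH_h(1,1)$ is the sum of $\sigma^2 h$, the variance of the $\rmd W^2$ part of $N_X$, and a $\rho$-weighted cross-term; and $\barH_h(1,2)$ combines the $\rho$-weighted covariance of $\sigma\int\rmd W^1$ with $N_\delta$ together with the covariance of the two $\rmd W^2$ integrals. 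After the substitution $v = t+h-s$ each of these reduces to elementary integrals of $1$, $\rme^{-\kappa v}$ and $\rme^{-2\kappa v}$ over $[0,h]$, and matching them termwise gives the stated expressions.

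The routine part is the bookkeeping of these Gaussian integrals. The only steps that need genuine care are the stochastic Fubini interchange, which is what puts $N_X$ into Wiener-integral form, and the consistent use of the correlation $\rho$ in the It\^o isometry, which is precisely what produces the off-diagonal entry $\barH_h(1,2)$ and the final $\rho$-dependent term of $\barH_h(1,1)$.
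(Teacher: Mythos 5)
Your proposal is correct and follows essentially the same route as the paper's proof: solve the Ornstein--Uhlenbeck equation for $\delta$ explicitly, substitute it into the integrated equation for $X$, use the (stochastic) Fubini interchange to write the noise in $X$ as a single Wiener integral, and then compute the covariance matrix of the resulting centred Gaussian pair using the correlation $\rho$ between $W^1$ and $W^2$. The only cosmetic difference is that you work directly on $[t,t+h]$ whereas the paper integrates from $0$ to $t$, which is the same computation up to a shift.
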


\begin{proof}
For all $t\ge 0$,
\[
X_t = X_0 + (\mu-\sigma^2/2)t - \int_0^t\delta_s\rmd s +\sigma W_t^1
\]
and, as $(\delta_t)_{0\le t\le T}$ is an Ornstein-Uhlenbeck process,
\[
\delta_t = \delta_0\rme^{-\kappa t} + \alpha(1-\rme^{-\kappa t}) + \int_0^t \eta \rme^{\kappa(s-t)}\rmd W^2_s\eqsp.
\]
Then,
\begin{align*}
\int_0^t\delta_s\rmd s &= (\delta_0-\alpha)(1-\rme^{-\kappa t})/\kappa + \alpha t + \eta\int_0^t\int_0^s\rme^{\kappa(u-s)}\rmd W^2_u\rmd s\eqsp,\\
&=(\delta_0-\alpha)(1-\rme^{-\kappa t})/\kappa + \alpha t + (\eta/\kappa) \int_0^t (1-\rme^{-\kappa (t-s)}) \rmd W^2_s\eqsp.
\end{align*}
Defining $\tilde{W}^1_t \eqdef - (\eta/\kappa) \int_0^t (1-\rme^{-\kappa (t-s)}) \rmd W^2_s+\sigma W_t^1$ and $\tilde{W}_t^2\eqdef \int_0^t \eta \rme^{\kappa(s-t)}\rmd W^2_s$, this yields:
\begin{align*}
X_t &= X_0 +  (\mu-\sigma^2/2)t + (\alpha-\delta_0)(1-\rme^{-\kappa t})/\kappa - \alpha t + \tilde{W}_t^1\eqsp,\\
\delta_t &= \delta_0\rme^{-\kappa t} + \alpha(1-\rme^{-\kappa t}) + \tilde{W}^2_t\eqsp.
\end{align*}
The proof is concluded upon noting that $\tilde{W}^1_t$ and $\tilde{W}^2_t$ are centered Gaussian random variables such that:
\begin{enumerate}[-]
\item $\!\!\mathrm{Var}\left[\tilde{W}^1_t\right] = \sigma^2  t + \eta^2\left\{ t + (1-e^{-2\kappa t})/(2\kappa) - 2(1-e^{-\kappa  t})/\kappa\right\}/\kappa^2 - 2\rho \eta \sigma\left\{ t - (1-e^{-\kappa  t})/\kappa \right\}/\kappa\eqsp,$
\item $\!\!\mathrm{Var}\left[\tilde{W}^2_t\right] = \eta^2(1-\rme^{-2\kappa t})/(2\kappa)\eqsp,$
\item $\!\!\mathrm{Cov}\left[\tilde{W}^1_t,\tilde{W}^2_t\right] = \left(\rho \eta\sigma-\eta^2/\kappa\right)\left(1-e^{-\kappa t} \right)/\kappa + \eta^2 \left(1-e^{-2\kappa t} \right)/(2\kappa^2)\eqsp.$
\end{enumerate}
\end{proof}

\bibliographystyle{alpha}

\newcommand{\etalchar}[1]{$^{#1}$}

\end{document}